\numberwithin{equation}{section}
\newtheorem{Theorem}{Theorem}[section]
\newtheorem{Lemma}[Theorem]{Lemma}
\newtheorem{Prop}[Theorem]{Proposition}
\newtheorem{Cor}[Theorem]{Corollary}
\theoremstyle{definition}
\newtheorem{Definition}[Theorem]{Definition}
\newtheorem{Example}[Theorem]{Example}
\newtheorem{Remark}[Theorem]{Remark}
\newcommand{\R}{\operatorname{\mathbb{R}}}
\newcommand{\Z}{\operatorname{\mathbb{Z}}}
\newcommand{\C}{\operatorname{\mathbb{C}}}
\newcommand{\Q}{\operatorname{\mathbb{Q}}}
\newcommand{\Div}{\operatorname{\mathrm{Div}}}
\newcommand{\Jac}{\operatorname{\mathrm{Jac}}}
\newcommand{\Pic}{\operatorname{\mathrm{Pic}}}
\newcommand{\val}{\operatorname{\mathrm{val}}}
\begin{document}

\title{Tropical curves and integrable piecewise linear maps}

\author{Rei Inoue}
\address{Chiba University,
\newline\phantom{iii} 1-33 Yayoi-cho, Inage,
Chiba 263-8522, Japan}
\email{reiiy@math.s.chiba-u.ac.jp}

\author{Shinsuke Iwao}
\address{Rikkyo University,
\newline\phantom{iii} 3-34-1 Nishi-Ikebukuro Toshima-ku, 
Tokyo 171-8501, Japan}
\email{iwao@rikkyo.ac.jp}

\begin{abstract}
We present applications of tropical geometry
to some integrable piecewise-linear maps,
based on the lecture given by one of the authors (R. I.) at the workshop
``Tropical Geometry and Integrable Systems" 
(University of Glasgow, July 2011), 
and on some new results obtained afterward. 
After a brief review on tropical curve theory,
we study the spectral curves and the isolevel sets of 
the tropical periodic Toda lattice and the periodic Box-ball system.
\end{abstract}

\keywords{Tropical geometry, spectral curve, isolevel set, Jacobian,
Toda lattice, Box-ball system}


\thanks{{\it 2010 Mathematics Subject Classification.}
Primary: 14H70. Secondary: 14T05.}

\thanks{
R.~I. is partially supported by Grant-in-Aid for Young Scientists (B)
(22740111).
S.~I. is supported by Grant-in-Aid for JSPS Fellows
(21-1939).}

\maketitle

\section{Introduction}

\subsection{Background --- integrable systems and algebraic geometry}
\label{sec:1-1}
Let us show an example of the remarkable application  of 
complex algebraic geometry to integrable systems.

Fix $N \in \Z_{>1}$, and let $\Z_N$ be the quotient ring $\Z / N\Z$.
The $N$-periodic Toda lattice equation is a famous integrable system given by
\begin{align}\label{originalToda}
a_n' = b_n - b_{n-1}, \qquad 
b_n' = b_n(a_{n+1} - a_n),
\end{align}
on the {\it phase space} $M=\{(a_n,b_n)_{n \in \Z_N}\} \simeq \C^{2N}$.
Here we write $a_n'$ for a derivation of $a_n = a_n(t)$ by the time $t$.
\footnote{The original form of the Toda lattice equation is 
$x_n'' 
=  \mathrm{e}^{x_{n+1} - x_n} - \mathrm{e}^{x_{n} - x_{n-1}}
$.
One obtains the above form via the transformation:
$a_n= x_n'$, $b_n = \mathrm{e}^{x_{n+1} - x_n}$.}
To solve this equation, we use an important property that
there are $N+1$ algebraically independent polynomial functions 
$h_j~(j=1,\ldots,N+1)$ on $M$,
which are conserved by the equation.
Fix $c = (c_1,\ldots,c_{N+1}) \in \C^{N+1}$
and define the subset of $M$ by 
$$
M_c = \{ m \in M ~|~ h_j(m) = c_j ~(j=1,\ldots,N+1) \}.
$$
This is called the {\it isolevel set} 
invariant under the time evolution.
By the definition, $M_c$ is an algebraic variety.
Since the holomorphic function on $M_c$ corresponds to the solution,
we want to know what kind of algebraic variety it is.
Let $\gamma_c$ be the algebraic curve given by
\begin{align}\label{Toda-curve}
y^2 + y (x^N + c_1 x^{N-1} + \cdots + x c_{N-1} + c_N) + c_{N+1} = 0.
\end{align}
The curve $\gamma_c$ is called the {\it spectral curve},
which is also invariant under the time evolution.
When $\gamma_c$ is smooth
(i.e. $\gamma_c$ is a hyperelliptic curve of genus $N-1$),
we have the followings \cite{DT76,KvM75}:
\begin{itemize}
\item[(i)]
The isolevel set $M_c$ is isomorphic to an affine part of the Jacobian
$\Jac(\gamma_c) $ of $\gamma_c$.

\item[(ii)]
The solution is written in terms of the corresponding 
Riemann's theta function. Moreover, the flow of the equation 
is linearized on $\Jac(\gamma_c)$.  
\end{itemize}

Let us explain more detail in the case of $N=2$.
With $c = (c_1, c_2, c_3) \in \C^3$, we fix the three conserved functions as
$$
h_1=a_1+a_2 = c_1, \qquad 
h_2 = a_1a_2-b_1-b_2 =c_2, \qquad
h_3 = b_1 b_2 =c_3.
$$
By erasing $a_2$ and $b_1$ in these relations, we obtain 
$b_2^2 + b_2(a_1^2-a_1c_1+c_2)+c_3 =0$.
This is nothing but the defining equation of $\gamma_c$
via $(x,y)=(-a_1,b_2)$.
The map $\phi : M_c \to \Jac(\gamma_c)$ is a composition of 
$$
\begin{matrix}
M_c & \to & \gamma_c & \stackrel{\rm AJ}{\to} & \Jac(\gamma_c)
\\ 
(a_1, a_2,b_1,b_2) & \mapsto & (-a_1,b_2):= P & \mapsto 
&\displaystyle{\int_{P_0}^P \omega}
\end{matrix}~,
$$
where AJ is the Abel-Jacobi map with a base point $P_0 \in \gamma_c$
and the holomorphic one form $\omega$ on $\gamma_c$.
The image of $\phi$ is $\Jac(\gamma_c)$ minus two points corresponding
to the infinity points of $\gamma_c$.

As one sees in this example,
the key to solve the equation is to find the spectral curve whose Jacobian 
is related to the isolevel set.

\subsection{Tropicalization}

Let $K$ be an algebraic closed field
with a valuation $\val : K \setminus \{0\} \to \R$ as 
$$
\val(ab) = \val(a)+\val(b), 
\qquad
\val(a+b) \geq \min[\val(a),~\val(b)], 
$$
for $a,b \in K$.
The {\it tropicalization} is the map from $K$ to $\R$ by this valuation.
For a polynomial 
$f = \sum_{m \in I} c_m x^m \in K[x_1,\ldots,x_n]$ 
where $I$ is a finite subset of $(\Z_{\geq 0})^n$,
we define its {\it tropicalization} as
$$
F = \min_{m \in I} [\val(c_m) + m \cdot X],
$$
where we set $m=(m_1,\ldots,m_n)$, $x^m = \prod_i x_i^{m_i}$ and 
$m \cdot X = \sum_i m_i X_i$.
We call $F$ as a {\it tropical polynomial} in $n$ variables $X_1,\ldots,X_n$.
In this manner, min-plus algebra is obtained as the tropicalization of $K$. 
Tropical geometry is algebraic geometry of 
min-plus algebra \cite{IMS-Book}, 
which can be interpreted as the tropicalization of 
the algebraic geometry on $K$.

On the other hand, among integrable systems,
there are some rational maps (on $K$) whose tropicalization
gives non-trivial interesting piecewise linear maps
(on $\R$).
We are interested in the case of $K = \overline{\C((t))}$, the field of  
Puiseux series in $t$,
and we expect a similar story as \S \ref{sec:1-1},
i.e. integrable structure of the piecewise-linear maps 
is described by tropical geometry.

\subsection{Contents}

We are interested in two piecewise-linear systems,
the tropical periodic Toda lattice (trop-pToda), and 
the periodic Box-boll system (pBBS).
These systems are obtained from known integrable rational maps:
the former is the tropicalization of discrete Toda lattice
as the name suggests,
and the latter is the special case of the tropical KdV equation.
It is natural to study them with tropical geometry,
since the integrability of the original rational maps
is described by complex algebraic geometry as the previous example.  
But we would like to emphasize that it is highly non-trivial problem,
because the tropicalization is a very formal limiting procedure
and nothing is ensured about how the original complex structure becomes. 

This article is organized as follows:
we introduce the basic notions of tropical curve theory in \S 2,
including tropical Jacobians for tropical curves,
and tropical theta functions, by following \cite{MikhaZhar06}. 
In \S 3, we review the general solution of the trop-pToda 
based on \cite{IT08,IT09,IT09b}. The general isolevel set 
is isomorphic to the tropical Jacobian, and the corresponding 
solution is written in terms of the tropical theta function.  
In \S 4, we present new results on the evolution equation and 
the spectral curve of the pBBS.
The initial value problem of the pBBS was already  
solved by applying crystal theory 
\cite{KTT06} or combinatorics \cite{MIT08},
where the time evolution is linearized on the high-dimensional real torus,
and the tropical theta functions appear in the solution.
We explore the tropical geometrical aspect of the pBBS via the tropical KdV
equation.
We explicitly give the tropical spectral curve of the pBBS,
and show that the above real torus is really the Jacobian of 
the tropical curve.

\section*{Acknowledgments}

The authors thank the organizers of the workshop
``Tropical Geometry and Integrable Systems" 
at University of Glasgow in July 2011.

\section{Tropical curve theory}
\subsection{Tropical curves}

In this article we consider affine tropical curves in $\R^2$
given by tropical polynomials of two variables as
$$
F(X,Y) 
=
\min_{i \in I} [C_i + n_i X + m_i Y]
\qquad C_i \in \R, \, n_i, m_i \in \Z_{\geq 0},
$$
where $I$ is a finite set.
The tropical curve $\Gamma$ given by a tropical polynomial 
$F(X,Y)$ is defined by
$$
\Gamma = \{(X,Y) \in \R^2 ~|~ 
F(X,Y) \text{ is {\it indifferentiable}} \}.
$$
The meaning of ``$(X,Y)$ is indifferentiable'' is that 
$F(X,Y)$ is accomplished by more than one terms in $F(X,Y)$
at $(X,Y)$.

\begin{Example}
See Figure \ref{i:fig:trop-examples} for examples of tropical curves.
(i) is given by $F(X,Y) = \min(X, \ Y, \ 1)$,
and (ii) is given by $F(X,Y) = \min[2Y, Y+2X, Y+X, Y+3, 10]$.
Let us explain the notion of ``indifferentiable'' in (i).
Let $A_1$, $A_2$ and $A_3$ be three open domains divided 
by the three boundaries $l_{12}$, $l_{23}$ and $l_{13}$,
and let $P$ be the intersection point $l_{12} \cap l_{23} \cap l_{13}$. 
The function $F(X,Y) = \min(X,\ Y,\ 1)$ is ``differentiable'' at 
$(X,Y) \in A_1 \cup A_2 \cup A_3$, 
since we have $F(X,Y) = 1$ in $A_1$,  $F(X,Y) = Y$ in $A_2$ and 
$F(X,Y) = X$ in $A_3$. 
On the other hand, $F(X,Y)$ is ``indifferentiable''
at $(X,Y) \in l_{12} \cup l_{23} \cup l_{13}$,
since at least two of $X$, $Y$ and $1$ become the minimum. 
For instance, 
$F(X,Y) = Y = 1$ on $l_{12} \setminus \{P\}$, 
and $F(X,Y) = X = Y = 1$ at $P$.
\end{Example}

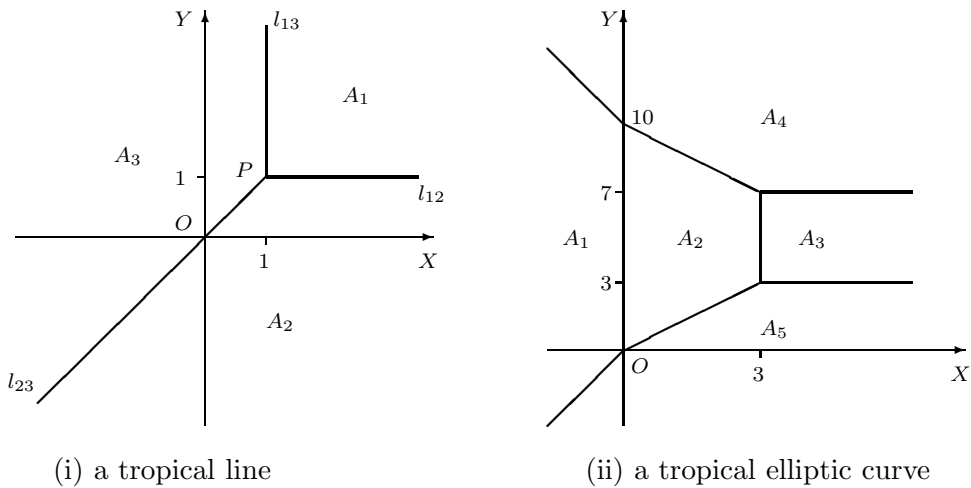
\begin{figure}
\unitlength=1.0mm

\begin{picture}(120,65)(10,-5)

\put(5,30){\vector(1,0){55}}
\put(58,26){\scriptsize$X$}
\put(30,5){\vector(0,1){55}}
\put(26,58){\scriptsize$Y$}
\put(26,31){\scriptsize$O$}

\put(38,30){\line(0,-1){1}}
\put(37,26){\scriptsize$1$}
\put(30,38){\line(-1,0){1}}
\put(26,37){\scriptsize$1$}

\thicklines

\put(8,8){\line(1,1){30}}
\put(38,38){\line(1,0){20}}
\put(38,38){\line(0,1){20}}

\put(48,48){\scriptsize$A_1$}
\put(38,18){\scriptsize$A_2$}
\put(18,40){\scriptsize$A_3$}
\put(34,38){\scriptsize$P$}
\put(39,58){\scriptsize$l_{13}$}
\put(58,35){\scriptsize$l_{12}$}
\put(4,10){\scriptsize$l_{23}$}

\thinlines
\put(75,15){\vector(1,0){55}}
\put(128,11){\scriptsize$X$}
\put(85,5){\vector(0,1){55}}
\put(82,58){\scriptsize$Y$}
\put(86,12){\scriptsize$O$}

\put(86,45){\scriptsize$10$}
\put(85,36){\line(-1,0){1}}
\put(82,35){\scriptsize$7$}
\put(85,24){\line(-1,0){1}}
\put(82,23){\scriptsize$3$}
\put(103,15){\line(0,-1){1}}
\put(102,11){\scriptsize$3$}

\thicklines

\put(85,15){\line(2,1){18}}
\put(103,24){\line(1,0){20}}
\put(103,24){\line(0,1){12}}
\put(85,45){\line(2,-1){18}}
\put(103,36){\line(1,0){20}}
\put(85,45){\line(-1,1){10}}
\put(85,15){\line(-1,-1){10}}
\put(85,15){\line(0,1){30}}

\put(77,29){\scriptsize$A_1$}
\put(92,29){\scriptsize$A_2$}
\put(108,29){\scriptsize$A_3$}
\put(103,45){\scriptsize$A_4$}
\put(103,17){\scriptsize$A_5$}

\put(10,-2){(i) a tropical line}
\put(80,-2){(ii) a tropical elliptic curve}
\end{picture}

\caption{Tropical curves}
\label{i:fig:trop-examples}
\end{figure}

The edges in tropical curves have rational slopes,
and we associate each vertex with a {\it primitive tangent vector}
which is a tangent vector given by a pair of coprime integers.
The primitive tangent vector is uniquely determined up to sign.
(If one of the integers is zero, then let another be $\pm 1$.)

\begin{Definition}\label{i:def:smoothness}
The tropical curve $\Gamma \subset \R^2$ is smooth if
the following two conditions hold:
\begin{itemize}
\item[(a)]
all vertices in $\Gamma$ are $3$-valent. 
\item[(b)]
For each $3$-valent vertex $v$, let $\xi_1, \xi_2, \xi_3$ be the
primitive tangent vectors which are outgoing from $v$.
Then these vectors satisfy
$\xi_1+\xi_2+\xi_3 = (0,0)$ and $|\xi_i \wedge \xi_j| = 1$ for
$i,j \in \{1,2,3\}, ~i \neq j$.  
(For two vectors $\xi = (n_1,n_2)$ and $\xi' = (n_1',n_2')$,
we set $\xi \wedge \xi'=n_1 n_2' - n_2 n_1'$.) 
\end{itemize}
When a tropical curve $\Gamma$ is smooth, 
the {\it genus} of $\Gamma$ is dim$ \ H_1(\Gamma, \Z)$.
\end{Definition}

The two tropical curves at Figure \ref{i:fig:trop-examples} are 
smooth, and the genera are zero and one respectively.

A smooth tropical curve is equipped with 
the {\it metric structure} as follows
(We omit the metric structure for non-smooth tropical curves
for simplicity. See \cite{MikhaZhar06}.):
\begin{Definition}
Assume $\Gamma$ is a smooth tropical curve.
Let $E(\Gamma)$ be the set of edges in $\Gamma$,
and let $\xi_e$ be the primitive tangent vector of $e \in E(\Gamma)$.
We define the {\it length} of edges $l: ~E(\Gamma) \to \R_{\geq 0}$ by
$$
  e \mapsto l(e) = \frac{\parallel e \parallel}
  {\parallel \xi_e \parallel},
$$
where $\parallel~~\parallel$ is any norm in $\R^2$.
\end{Definition}

\subsection{Abelian integral and tropical Jacobian.}

Let $\Gamma$ be a smooth tropical curve whose genus $g$ is not zero.
We fix $g$ generators $B_1,\cdots,B_g$ of the fundamental group of 
$\Gamma$. 
We define the {\it bilinear form} of two paths $p$ and $p'$ on $\Gamma$ by
$$
\langle p ,~p' \rangle
=
\text{``the oriented length of the common part of $p$ and $p'$''}.  
$$
Here ``oriented'' means the sign 
depending on the directions of the two paths on the common part.

\begin{Example}
See Figure \ref{i:fig:trop-example2} for 
the smooth tropical curve $\Gamma$ given by
\begin{align*}
F(X,Y) = \min(2Y, \ Y+3 X, \ Y+2X, \ Y+X+1, \ Y+4, 11).
\end{align*}
The genus of $\Gamma$ is $2$, and we fix the basis $B_1$ and $B_2$ 
of the fundamental group of $\Gamma$ as depicted.
The bilinear forms of $B_1$ and $B_2$ take the values as
$$
\langle B_1, B_1 \rangle = 20, 
\qquad 
\langle B_1, B_2 \rangle = -7,
\qquad 
\langle B_2, B_2 \rangle = 14.
$$
Let us demonstrate how to compute $\langle B_1, B_2 \rangle$:
the common part of $B_1$ and $B_2$ is the edge $PQ$, and we have $l(PQ) = 7$.
Moreover, the directions of $B_1$ and $B_2$ on $PQ$ are opposite,
and we obtain $-l(PQ) = -7$ as the oriented length of $B_1 \cap B_2$.
\end{Example}

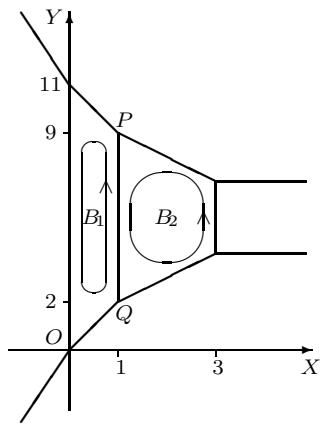
\begin{figure}
\unitlength=0.8mm

\begin{picture}(100,70)(-20,0)

\put(10,10){\vector(1,0){50}}
\put(58,6){\scriptsize$X$}
\put(20,0){\vector(0,1){66}}
\put(16,64){\scriptsize$Y$}

\thicklines

\put(20,10){\line(-2,-3){8}}
\put(20,10){\line(1,1){8}}
\put(28,18){\line(2,1){16}}
\put(44,26){\line(1,0){15}}
\put(20,54){\line(-2,3){8}}
\put(20,54){\line(1,-1){8}}
\put(28,46){\line(2,-1){16}}
\put(44,38){\line(1,0){15}}
\put(20,10){\line(0,1){44}}
\put(28,18){\line(0,1){28}}
\put(44,26){\line(0,1){12}}

\thinlines

\put(24,32){\oval(4,25)}
\put(24.8,36){\scriptsize$\wedge$}
\put(22,31){\scriptsize${B_{\!1}}$}
\put(36,32){\oval(12,15)}
\put(40.8,31){\scriptsize$\wedge$}
\put(34,31){\scriptsize$B_{\!2}$}

\put(15,53){\scriptsize$11$}
\put(16,11){\scriptsize$O$}
\put(28,10){\line(0,-1){1}}
\put(27.5,6){\scriptsize$1$}
\put(44,10){\line(0,-1){1}}
\put(43.5,6){\scriptsize$3$}
\put(20,18){\line(-1,0){1}}
\put(16,17){\scriptsize$2$}
\put(20,46){\line(-1,0){1}}
\put(16,45){\scriptsize$9$}

\put(27.5,47){\scriptsize$P$}
\put(27.5,15){\scriptsize$Q$}
\end{picture}
\caption{Tropical curve of genus $2$}
\label{i:fig:trop-example2}
\end{figure}

Now we introduce the {\it abelian integral} and the {\it tropical 
Jacobian} of $\Gamma$:
\begin{Definition}
Fix $P_0 \in \Gamma$.
The {\it abelian integral} $\psi: ~\Gamma \to \R^g$ is given by
$$
P \mapsto \psi(P) 
= (\langle B_i, \overset{\curvearrowright}{P_0 P} \rangle)_{i=1,\ldots,g},
$$
where $\overset{\curvearrowright}{P_0 P}$ is a path from 
$P_0$ to $P$.
The map $\psi$ induces the map from a set of divisors $\mathrm{Div}(\Gamma)$
on $\Gamma$ to $\R^g;$
$$
  \sum_{i \in I} n_i P_i \mapsto \sum_{i \in I} n_i \, \psi(P_i),
$$
where $I$ is a finite set and $n_i \in \Z$.
\end{Definition}

\begin{Definition}\label{i:def:omega}
The period matrix $\Omega$ of $\Gamma$ is given by
\begin{align}\label{i:eq:tropOmega}
  \Omega = (\langle B_i, B_j\rangle)_{i,j=1,\ldots,g} \in \mathrm{Mat}(g;\R).
\end{align}
The tropical Jacobian variety $J(\Gamma)$ of $\Gamma$ is the
$g$-dimensional real torus given by
\begin{align}\label{i:eq:tropJ}
  J(\Gamma) = \R^g / \Omega \Z^g.
\end{align}
\end{Definition}

\begin{Example}
The tropical curve of genus $1$ depicted at
Figure \ref{i:fig:trop-examples} (ii)
has the period matrix $\Omega = 20$,
and the Jacobian is $\R / 20 \Z$.
As for the tropical curve of genus $2$ depicted at Figure
\ref{i:fig:trop-example2},
the period matrix and the Jacobian are as  
$$
\Omega = \begin{pmatrix} 20 &-7 \\ -7 & 14 \end{pmatrix},
\qquad 
J(\Gamma) = \R^2 / \Omega \Z^2.
$$
\end{Example}

\begin{Remark}
The matrix $\Omega$ is symmetric and positive definite by definition,
and $J(\Gamma)$ is a tropical analogue of Jacobian variety.
By removing all infinite edges of $\Gamma$,
we obtain the maximal compact subgraph $\Gamma^{\mathrm{cpt}}$  
of $\Gamma$.
The map $\psi$ is not injective since 
$\overset{\curvearrowright}{P_0 P}$ is not unique,
but the induced map $\Gamma^{\mathrm{cpt}} \to J(\Gamma)$ becomes injective.
When $g=1$,
$\psi$ induces $\Gamma^{\mathrm{cpt}} \stackrel{\sim}{\to} J(\Gamma)$.
\end{Remark}

\begin{Remark}
There is a well-defined notion of rational equivalence class 
in $\Div(\Gamma)$. Let $\Pic^k(\Gamma)$ be the rational equivalence class of 
$\Div^k(\Gamma)$, where $\Div^k(\Gamma) \subset \Div(\Gamma)$ is 
a set of divisors of degree $k$. Then we have a commutative diagram: 
$$
\begin{matrix}
\Div^k(\Gamma) & \to & J(\Gamma)
\\
& \searrow & \uparrow_{\stackrel{\beta}{}} 
\\[1mm]
&  & \Pic^k(\Gamma)  
\end{matrix}
$$
where the map $\beta$ is an isomorphism \cite{MikhaZhar06}.
\end{Remark}

\subsection{Tropical theta function.}

Fix a positive integer $g$ and a symmetric and positive definite 
matrix $\Omega \in \mathrm{Mat}(g;\R)$.
(Here the matrix $\Omega$ is not always 
a period matrix of some tropical curve.) 

\begin{Definition}\label{i:def:theta}
The {\it tropical theta function} $\Theta({\bf Z}; \Omega)$ 
of ${\bf Z}\in \R^g$ is defined by 
$$
\Theta({\bf Z};\Omega)
=
\min_{{\bf n} \in \Z^g} \Bigl\{{\bf n} \cdot 
\Bigl(\frac{1}{2} \Omega {\bf n} + {\bf Z}\Bigr) \Bigr\}.
$$   
We call the $g$-dimensional real torus given by
\begin{align}\label{i:eq:g-J}
J_\Omega = \R^g / \Omega \Z^g
\end{align} 
the {\it principally polarized tropical abelian variety}.
(If $\Omega$ is the period matrix of a tropical curve $\Gamma$,
then $J_\Omega$ is nothing but the tropical Jacobian $J(\Gamma)$.)
\end{Definition}

It is easy to see the following:
\begin{Lemma}
The function $\Theta({\bf Z}) = \Theta({\bf Z}; \Omega)$ 
satisfies the quasi-periodicity:
\begin{align}\label{i:theta-quasi}
\Theta({\bf Z} + \Omega {\bf m})
=
- {\bf m} \cdot \Bigl(\frac{1}{2} \Omega {\bf m} + {\bf Z}\Bigr) 
+ \Theta({\bf Z})
\qquad {\bf m} \in \Z^g.
\end{align}
\end{Lemma}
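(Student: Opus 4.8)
The plan is to reduce the claim to an elementary reindexing inside the minimization, exploiting the symmetry of $\Omega$; this is the min-plus shadow of the classical completion-of-the-square argument for Riemann theta functions. Writing $\phi_{\bf Z}({\bf n}) = {\bf n}\cdot(\frac{1}{2}\Omega{\bf n}+{\bf Z})$ for the function being minimized, so that $\Theta({\bf Z}) = \min_{{\bf n}\in\Z^g}\phi_{\bf Z}({\bf n})$, the first step is to substitute ${\bf Z}+\Omega{\bf m}$ into the definition and expand:
\begin{align*}
\Theta({\bf Z}+\Omega{\bf m})
=
\min_{{\bf n}\in\Z^g}\Bigl\{\tfrac{1}{2}\,{\bf n}\cdot\Omega{\bf n} + {\bf n}\cdot{\bf Z} + {\bf n}\cdot\Omega{\bf m}\Bigr\}.
\end{align*}

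The main step is to change variables from ${\bf n}$ to ${\bf k} = {\bf n}+{\bf m}$. Since ${\bf m}\in\Z^g$ is fixed, the assignment ${\bf n}\mapsto{\bf n}+{\bf m}$ is a bijection of $\Z^g$, so ${\bf k}$ ranges over exactly the same index set as ${\bf n}$ and the minimum is taken over the same set before and after. Substituting ${\bf n}={\bf k}-{\bf m}$ and expanding $\tfrac{1}{2}({\bf k}-{\bf m})\cdot\Omega({\bf k}-{\bf m})$, I would invoke the symmetry of $\Omega$ (equivalently ${\bf k}\cdot\Omega{\bf m}={\bf m}\cdot\Omega{\bf k}$) to collect the cross terms. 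The two contributions $\pm\,{\bf k}\cdot\Omega{\bf m}$ then cancel, and what remains is precisely $\phi_{\bf Z}({\bf k})$ together with the ${\bf k}$-independent constant $-{\bf m}\cdot\bigl(\tfrac{1}{2}\Omega{\bf m}+{\bf Z}\bigr)$.

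Finally, since that constant factors out of the minimum over ${\bf k}$, I would conclude
\begin{align*}
\Theta({\bf Z}+\Omega{\bf m})
=
\min_{{\bf k}\in\Z^g}\phi_{\bf Z}({\bf k}) - {\bf m}\cdot\Bigl(\tfrac{1}{2}\Omega{\bf m}+{\bf Z}\Bigr)
=
\Theta({\bf Z}) - {\bf m}\cdot\Bigl(\tfrac{1}{2}\Omega{\bf m}+{\bf Z}\Bigr),
\end{align*}
which is the asserted quasi-periodicity \eqref{i:theta-quasi}. There is no genuine obstacle here: the only points deserving care are the use of the symmetry of $\Omega$ to cancel the cross terms and the observation that the shift ${\bf n}\mapsto{\bf n}+{\bf m}$ permutes $\Z^g$. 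Note that positive definiteness of $\Omega$ is not needed for this algebraic identity itself; it only guarantees that each minimum defining $\Theta$ is actually attained.
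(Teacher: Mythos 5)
Your proof is correct: the index shift ${\bf n}\mapsto{\bf n}+{\bf m}$ combined with the symmetry of $\Omega$ gives exactly the claimed identity, and the cross-term cancellation is computed correctly. The paper itself offers no proof (it states the lemma with ``It is easy to see''), and your argument is precisely the standard one that is intended, including the apt remark that positive definiteness is needed only for the minimum to be attained, not for the algebraic identity.
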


\begin{Remark}
Recall the Riemann's theta function:
\begin{align}
\theta({\bf z}; W) 
= 
\sum_{{\bf n} \in \Z^g} \exp \bigl(\pi \sqrt{-1} \ {\bf n} \cdot 
(W {\bf n} +2 {\bf z})\bigr)
\qquad {\bf z} \in \mathbb{C}^g,
\end{align}
where $W \in \mathrm{Mat}(g;\mathbb{C})$ is symmetric and ${\rm Im} W$ 
is positive definite.
This function satisfies the periodicity and 
quasi-periodicity: 
\begin{align}
\begin{split}
&\theta({\bf z} + {\bf m}; W) = \theta({\bf z}; W),
\\
&\theta({\bf z}+ K {\bf m}; W) 
=  
\exp \bigl(-\pi \sqrt{-1} \ {\bf m} \cdot (W {\bf m} +2 {\bf z})\bigr)
\theta({\bf z}; W),
\end{split}
\end{align}
for ${\bf m} \in \Z^g$.
Remark that only the quasi-periodicity remains in the tropical case.
\end{Remark}

\section{Tropical periodic Toda lattice}

\subsection{Introduction}

The tropical periodic Toda lattice (trop-pToda) is given by 
the piecewise-linear evolution equation: 
\begin{align}\label{i:eq:UDpToda}
\begin{split}
&Q_j^{t+1} = \min(W_j^t, Q_j^t-X_j^t),
\qquad X_j^t = \min_{0 \leq k \leq N-1}
\bigl(\sum_{l=1}^k (W_{j-l}^t - Q_{j-l}^t)\bigr),
\\
&W_j^{t+1} = Q_{j+1}^t+W_j^t - Q_j^{t+1}
\end{split}
\end{align}
on the phase space
$$
\displaystyle{
\mathcal{T} 
= 
\{(Q_j,W_j)_{j \in \Z_N} ~|~ \sum_{j=1}^N Q_j < \sum_{j=1}^N W_j \}
\subset \R^{2N}.}
$$
(In \cite{IT08}, this system is called the ultradiscrete periodic 
Toda lattice,
where ``ultradiscrete'' means ``tropical'' in our present terminology.)

\begin{Remark}
The trop-pToda is the tropicalization of 
the discrete $N$-periodic Toda lattice \cite{HT95} given by
$$
q_j^{t+1} = q_j^t + w_j^t - w_{j-1}^{t+1},
\qquad
w_j^{t+1} = \frac{q_{j+1}^t w_j^t}{q_j^{t+1}}
$$
on the phase space $\{(q_j,w_j)_{j \in \Z_N} \} \simeq K^{2N}$ 
under the setting:
$$
\sum_{j=1}^N \left(\val(w_j^t)-\val(q_j^t)\right) > 0,
\qquad
Q_j^t = \val(q_j^t),
\qquad 
W_j^t = \val(w_j^t).
$$
See \cite[Prop. 2.1]{KimijimaTokihiro02} for the detail,
where the strategy is essentially same as Lemma \ref{lem:P}.

The Toda lattice equation \eqref{originalToda}
is a continuous limit $\delta \to 0$ of the above discrete Toda lattice, 
with $w_j^t = \delta^2 b_j$
and $q_j^t = 1 + \delta a_j'$.
Here $\delta$ is a unit of the discrete time.
\end{Remark}

The system \eqref{i:eq:UDpToda} has $N+1$ conserved tropical polynomials
$H_k ~(k=1,\ldots,N+1)$
on $\mathcal{T}$. Here are some of them:
\begin{align}\label{i:udpToda-H}
\begin{split}
&H_1 = \min_{1 \leq j \leq N}(Q_j,W_j),
\\  
&H_2 = \min\bigl(\min_{1\leq i<j \leq N} (Q_i+Q_j),
            \min_{1\leq i<j \leq N} (W_i+W_j),
            \min_{1\leq i,j\leq N, j \neq i,i-1} (Q_i+W_j)\bigr),
\\
&H_N = \min \bigl(\sum_{j=1}^N Q_j, \sum_{j=1}^N W_j \bigr),
\\ 
&H_{N+1} = \sum_{j=1}^N (Q_j + W_j).
\end{split}
\end{align}

\subsection{General solution}

Fix $C = (C_k)_{k=1,\cdots,N+1} \in \R^{N+1}$, and define the 
isolevel set $\mathcal{T}_C$ by
\begin{align}\label{i:TC}
\mathcal{T}_C 
= 
\{ \tau \in \mathcal{T} ~|~ H_k(\tau) = C_k ~(k=1,\cdots,N+1) \}.
\end{align}
We are to describe the general solution to \eqref{i:eq:UDpToda} and 
the isolevel set $\mathcal{T}_C$ in terms of tropical geometry.

Let $\Gamma_C$ be the tropical curve given by tropical polynomial 
\begin{align}\label{i:trop-Toda}
F(X,Y)
= 
\min\bigl(2Y, Y+ \min(NX, (N-1)X+C_1, \ldots, X+C_{N-1}, C_N), C_{N+1}\bigr).
\end{align}
We call $\Gamma_C$ the spectral curve of the trop-pToda.
Remark that $F(X,Y)$ corresponds to the tropicalization of
the l.h.s of \eqref{Toda-curve} (the defining equation for the algebraic 
curve $\gamma_c$.)

We set $L$, $\lambda_k$ and  $p_k$ for $k = 0,\ldots,N-1$ as 
\begin{align}\label{i:C-lambda}
\begin{split}
&L = C_{N+1} - 2 (N-1) C_1,
\\
&\lambda_0 = 0, \qquad \lambda_k = C_{k+1} - C_k
\qquad k=1,\ldots,N-1, 
\\
&p_0=L, \qquad p_k = L - 2 \sum_{j=1}^{N-1} \min(\lambda_k,\lambda_j) 
\qquad k=1,\ldots,N-1.
\end{split}
\end{align}
The curve $\Gamma_C$ is smooth if and only if 
$\lambda_1 < \lambda_2 < \cdots < \lambda_{N-1}$ and $p_k > 0$ for 
$k=1,\ldots,N-1$. Assume $\Gamma_C$ is smooth, 
then the genus $g$ is $g=N-1$. 
See Figure 3 for $\Gamma_C$, where we set $C_1 =0$ for simplicity.

Fix the basis $B_1,\ldots,B_g$ of the fundamental group $\pi_1(\Gamma_C)$
as Figure 3.
The period matrix $\Omega$ \eqref{i:eq:tropOmega} of $\Gamma_C$ is 
obtained as
\begin{align}\label{i:omega}
\Omega_{ij} = 
\begin{cases}
p_{i-1} + p_{i} + 2 (\lambda_{i} - \lambda_{i-1}) 
\qquad i=j 
\\
-p_i \qquad j=i+1
\\
-p_j \qquad i=j+1
\\
0 \qquad \text{otherwise}
\end{cases}
\end{align}
and we get the tropical Jacobian of $\Gamma_C$ as 
\begin{align}\label{pToda-J}
J(\Gamma_C) = \R^g / \Omega \Z^g.
\end{align}

\begin{figure}\label{i:fig:curveToda}
\unitlength=0.8mm
\begin{picture}(140,78)(-25,3)

\put(20,10){\vector(1,0){70}}
\put(0,10){\line(1,0){20}}
\put(91,9){\scriptsize $X$}

\thicklines

\put(15,74){\scriptsize $L$}
\put(20,6){\scriptsize $0$}
\put(28,10){\line(0,-1){1}}
\put(27,6){\scriptsize $\lambda_1$}
\put(36,10){\line(0,-1){1}}
\put(35,6){\scriptsize $\lambda_2$}
\put(45,6){$\cdots$}
\put(56,10){\line(0,-1){1}}
\put(55,6){\scriptsize $\lambda_{g-1}$}
\put(68,10){\line(0,-1){1}}
\put(67,6){\scriptsize $\lambda_{g}$}

\put(20,10){\line(-1,-2){4}}
\put(20,10){\line(4,5){8}}
\put(28,20){\line(1,1){8}}
\put(36,28){\line(2,1){10}}
\put(53,35){\line(5,1){15}}
\put(68,38){\line(1,0){15}}

\put(20,75){\line(-1,2){4}}
\put(20,75){\line(4,-5){8}}
\put(28,65){\line(1,-1){8}}
\put(36,57){\line(2,-1){10}}
\put(53,50){\line(5,-1){15}}
\put(68,47){\line(1,0){15}}

\put(20,10){\line(0,1){65}}
\put(28,20){\line(0,1){45}}
\put(36,28){\line(0,1){29}}
\put(56,35.5){\line(0,1){14}}
\put(68,38){\line(0,1){9}}
\put(46,42){$\cdots$}

\thinlines 

\put(24,42){\oval(4,45)}
\put(26,32){\vector(0,1){3}}
\put(22,29){\scriptsize $B_{\!1}$}
\put(32,42){\oval(4,27)}
\put(34,37){\vector(0,1){3}}
\put(30,34){\scriptsize $B_{\!2}$}
\put(62,42){\oval(8,8)}
\put(66,41){\vector(0,1){3}}
\put(61,41){\scriptsize $B_{\!g}$}

\end{picture}
\caption{Spectral curve for the trop-pToda}
\end{figure}
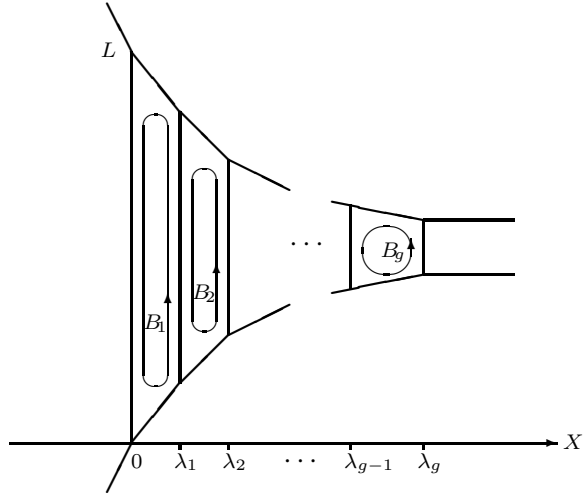

\begin{Theorem}\label{i:th:pToda} 
When $\Gamma_C$ is smooth, we have the following:
\\
(i) {\rm \cite[Th.~3.5]{IT09}} 
Fix ${\bf Z}_0 \in \R^g$ and define 
$T_n^t = \Theta({\bf Z}_0 + \boldsymbol{\lambda} t - L {\bf e}_1 n;\Omega)$,
where 
$$
\boldsymbol{\lambda} = (\lambda_1, \lambda_2 -\lambda_1, \ldots,
\lambda_g-\lambda_{g-1}), ~
{\bf e}_1 = (1,0,\ldots,0) \in \R^g.
$$ 
The solution for the trop-pToda is given by
\begin{align}\label{i:QWsol}
\begin{split}
  &Q_n^t = T_{n-1}^{t} + T_{n}^{t+1}
           - T_{n-1}^{t+1} - T_{n}^{t}+C_1,
  \\
  &W_n^t = T_{n-1}^{t+1} + T_{n+1}^t-
          T_{n}^{t} - T_{n}^{t+1} + L+C_1.
\end{split}
\end{align}
(ii) {\rm \cite[Th.~1.3]{IT09b}} This solution induces the isomorphism 
$J(\Gamma_C) \stackrel{\sim}{\rightarrow} \mathcal{T}_C$.
In particular, the time evolution of the trop-pToda is linearized
on $J(\Gamma_C)$, whose velocity is $\boldsymbol{\lambda}$.
\end{Theorem}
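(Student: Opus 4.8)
The plan is to follow the template of the classical algebro-geometric integration recalled in \S\ref{sec:1-1}, replacing the Abel--Jacobi linearization by its tropical counterpart. For part (i) I would verify the closed formula \eqref{i:QWsol} by direct substitution into the evolution equation \eqref{i:eq:UDpToda}; for part (ii) I would exhibit a pair of mutually inverse piecewise-linear maps between $\mathcal{T}_C$ and $J(\Gamma_C)$, the shift ${\bf Z}_0 \mapsto {\bf Z}_0 + \boldsymbol{\lambda}$ accounting for one time step.

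For (i), the main ingredient is a tropical (ultradiscrete) Hirota-type bilinear identity for the tau-functions $T_n^t$. First I would recall that the discrete $N$-periodic Toda lattice over $K$ admits a bilinear (Hirota) form whose $\tau$-functions are built from Riemann theta functions on $\Jac(\gamma_c)$; tropicalizing this bilinear form term by term produces a min-plus identity. Then, using the quasi-periodicity \eqref{i:theta-quasi} of $\Theta$ together with the explicit arguments ${\bf Z}_0 + \boldsymbol{\lambda} t - L{\bf e}_1 n$, I would check that $T_n^t$ satisfies this tropical bilinear identity. Feeding this back, the defining combinations of $Q_n^t$ and $W_n^t$ in \eqref{i:QWsol} reduce exactly to the two lines of \eqref{i:eq:UDpToda}; the quasi-periodicity also guarantees $\Z_N$-periodicity in $n$ and that the resulting point lies on the correct isolevel set, i.e. $H_k(Q,W)=C_k$.

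For (ii), I would begin with the dimension count: $\mathcal{T}_C$ is cut out of $\R^{2N}$ by the $N+1$ independent conditions $H_k=C_k$, hence has dimension $N-1=g=\dim J(\Gamma_C)$. Injectivity of the solution map $J(\Gamma_C)\to\mathcal{T}_C$, ${\bf Z}_0 \mapsto (Q_n^0,W_n^0)$, should follow from the quasi-periodicity together with the strict convexity encoded in $\Omega$ being positive definite. For surjectivity and an explicit inverse, I would build a tropical eigenvector (Abel--Jacobi) map: to each $\tau=(Q_n,W_n)\in\mathcal{T}_C$ associate a degree-$g$ effective divisor on $\Gamma_C$ --- the tropical analogue of the intersection of the eigenline bundle of the Lax matrix with the spectral curve --- and then apply the isomorphism $\beta:\Pic^g(\Gamma_C)\xrightarrow{\sim}J(\Gamma_C)$ quoted earlier from \cite{MikhaZhar06}. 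Showing this map is inverse to the solution map yields the isomorphism, and linearity of the flow is immediate from the translation structure of \eqref{i:QWsol}.

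The hardest part will be (ii), and specifically the surjectivity and inverse construction. Tropicalization is only a valuation limit, so a priori the spectral divisor of $\tau$ need not fill out all of $\Pic^g(\Gamma_C)$, nor need distinct points of $J(\Gamma_C)$ give distinct configurations once the $\min$ operations collapse information. The delicate work is to prove that the tropical Abel--Jacobi correspondence is genuinely bijective on the smooth locus --- matching the chambers of the piecewise-linear solution with the affine pieces of $J(\Gamma_C)$ --- which requires controlling how the generic non-degenerate spectral data of the lattice over $K$ survive the tropical limit.
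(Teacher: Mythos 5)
Your overall architecture matches the paper's outline: part (i) is reduced to a min-plus bilinear identity for the tau functions, namely $T_n^{t+1}+T_n^{t-1}=\min\bigl[2T_n^t,\ T_{n-1}^{t+1}+T_{n+1}^{t-1}+L\bigr]$ (\cite[Prop.~2.10]{IT09}, quoted in the paper right after the theorem), and part (ii) is obtained, exactly as you propose, by composing an injective divisor map $\alpha:\mathcal{T}_C\to\mathrm{Div}_{\mathrm{eff}}^{g}(\Gamma_C)$ with the abelian integral $\psi$; the paper constructs $\alpha$ explicitly only for $N=2$ and notes that it ``becomes too complicated'' for general $N$, so your identification of where the hard work sits in (ii) is accurate.

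The genuine gap is in how you justify the bilinear identity in (i). You propose to take the classical Hirota/theta-function bilinear form over $K$ and tropicalize it ``term by term''. This step fails, for two reasons. First, the valuation only satisfies $\val(a+b)\geq\min[\val(a),\val(b)]$, with strict inequality whenever leading coefficients cancel; the classical Fay/Hirota relation is a signed sum of products of theta functions equal to \emph{zero}, so cancellation is not an exceptional degeneracy but the entire content of the identity, and no min-plus equality descends from it term by term. Second, the tropical theta function $\Theta(\mathbf{Z};\Omega)$ of Definition \ref{i:def:theta} is not a priori the valuation of the Riemann theta function $\theta(\mathbf{z};W)$: the latter is an infinite series, and identifying its valuation with the minimum of the valuations of its terms again requires excluding cancellation, now among infinitely many summands. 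This is precisely the issue the introduction flags: tropicalization ``is a very formal limiting procedure and nothing is ensured about how the original complex structure becomes''. The route taken in \cite{IT09}, which the paper follows, sidesteps the limit entirely: the displayed identity is proved \emph{directly} for the tropical theta function, by establishing a tropical analogue of Fay's trisecant identity within the min-plus framework itself, and the substitution into \eqref{i:eq:UDpToda} then proceeds as you describe. To close your gap you would need either to prove such a tropical Fay identity yourself, or to prove a genuine theorem that tropicalization commutes with the theta-function solution of the discrete periodic Toda lattice over $K$ --- neither of which is a formal manipulation.
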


One of the keys to prove Theorem \ref{i:th:pToda} (i) 
is the following lemma:

\begin{Lemma}\cite[Prop. 2.10]{IT09}
The function $T_n^t$ satisfies
$$
T_n^{t+1} + T_n^{t-1} 
= 
\min [ 2 T_n^t, ~ T_{n-1}^{t+1} + T_{n+1}^{t-1} + L ].
$$
\end{Lemma}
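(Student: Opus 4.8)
The plan is to reduce the stated identity to a single assertion about the tropical theta function and then prove that by manipulating its defining lattice minimum. Writing ${\bf W} = {\bf Z}_0 + \boldsymbol{\lambda}\,t - L{\bf e}_1 n$ and abbreviating ${\bf a} = \boldsymbol{\lambda}$, ${\bf b} = L{\bf e}_1$, the five tau-values occurring in the lemma are the theta function evaluated at ${\bf W}$, ${\bf W}\pm{\bf a}$ and ${\bf W}\pm({\bf a}+{\bf b})$, so the claim becomes the purely theta-theoretic identity
$$\Theta({\bf W}+{\bf a}) + \Theta({\bf W}-{\bf a}) = \min\bigl[\,2\Theta({\bf W}),\ \Theta({\bf W}+{\bf a}+{\bf b}) + \Theta({\bf W}-{\bf a}-{\bf b}) + L\,\bigr],$$
to be shown for every ${\bf W}\in\R^g$. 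First I would record the one bound that comes for free: since $\Theta(\cdot;\Omega)$ is a minimum of affine functions it is concave, so the midpoint inequality gives $\Theta({\bf W}+{\bf a}) + \Theta({\bf W}-{\bf a}) \le 2\Theta({\bf W})$ immediately. This already matches the first entry of the $\min$; the substance is the comparison with the second entry, which the concavity argument alone cannot supply.

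For this I would expand each theta by its definition and combine the two summands into one minimum over $\Z^g\times\Z^g$; with $Q({\bf x}) = \tfrac12{\bf x}\cdot\Omega{\bf x}$,
$$\Theta({\bf W}+{\bf c}) + \Theta({\bf W}-{\bf c}) = \min_{{\bf m},{\bf n}\in\Z^g}\bigl\{Q({\bf m}) + Q({\bf n}) + ({\bf m}+{\bf n})\cdot{\bf W} + ({\bf m}-{\bf n})\cdot{\bf c}\bigr\}.$$
The decisive move is the change of summation variables ${\bf u} = {\bf m}+{\bf n}$, ${\bf v} = {\bf m}-{\bf n}$, which ranges over all equal-parity pairs and turns the quadratic part into $\tfrac12 Q({\bf u}) + \tfrac12 Q({\bf v})$ and the linear part into ${\bf u}\cdot{\bf W} + {\bf v}\cdot{\bf c}$. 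Splitting according to the common parity class ${\boldsymbol\delta}\in\{0,1\}^g$ of ${\bf u}$ and ${\bf v}$ decouples the ${\bf W}$-dependence from the ${\bf c}$-dependence and produces a tropical analogue of the theta addition formula,
$$\Theta({\bf W}+{\bf c}) + \Theta({\bf W}-{\bf c}) = \min_{{\boldsymbol\delta}\in\{0,1\}^g}\bigl[\Theta_{\boldsymbol\delta}({\bf W}) + \Theta_{\boldsymbol\delta}({\bf c})\bigr],$$
where $\Theta_{\boldsymbol\delta}$ is the theta-like minimum over the single class ${\boldsymbol\delta}$. Applying this with ${\bf c}={\bf a}$, ${\bf c}={\bf 0}$ and ${\bf c}={\bf a}+{\bf b}$, and noting that the common factor $\Theta_{\boldsymbol\delta}({\bf W})$ lets the outer minimum pass through termwise, the whole identity reduces to the finite, class-by-class relation $\Theta_{\boldsymbol\delta}({\bf a}) = \min[\Theta_{\boldsymbol\delta}({\bf 0}),\ \Theta_{\boldsymbol\delta}({\bf a}+{\bf b}) + L]$ for each ${\boldsymbol\delta}$; since this is an exact equality it delivers both inequalities at once, so the reverse direction needs no separate treatment.

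This last step is where the real content sits, and I expect it to be the main obstacle: the relation among the characteristic minima has no reason to hold for arbitrary data and must invoke the explicit description of $\Omega$, $\boldsymbol{\lambda}$ and $L$ for the spectral curve $\Gamma_C$. Concretely I would insert the entries of $\Omega$ from \eqref{i:omega} together with $p_0 = L$ and the smoothness hypotheses $\lambda_1 < \cdots < \lambda_{g}$ and $p_k > 0$ from \eqref{i:C-lambda}, and then track the lattice vector ${\bf v}$ that attains $\Theta_{\boldsymbol\delta}$ at each of the three arguments. The point to establish is that passing from ${\bf a}$ to ${\bf a}+{\bf b} = \boldsymbol{\lambda} + L{\bf e}_1$ moves the optimal ${\bf v}$ by a single step in the first coordinate; the bookkeeping of that one-step shift is exactly what generates the additive constant $L$ and decides, class by class, whether the minimum is taken on the ``$2\Theta({\bf W})$'' branch or the ``$+L$'' branch. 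Controlling this shift rigorously—equivalently, pinning down which linear piece of the piecewise-linear function $\Theta$ is active under translation by $\boldsymbol{\lambda}$ and by $L{\bf e}_1$—is the crux, and the positivity $L>0$ together with the strict ordering of the $\lambda_k$ forced by smoothness are precisely what make the case distinction clean and yield equality rather than a mere inequality.
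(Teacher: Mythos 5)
Your reduction framework is correct as far as it goes. The translation of the five values of $T_n^t$ into $\Theta$ evaluated at ${\bf W}$, ${\bf W}\pm\boldsymbol{\lambda}$, ${\bf W}\pm(\boldsymbol{\lambda}+L{\bf e}_1)$ is right; the concavity bound is right; and your parity-splitting identity
\begin{equation*}
\Theta({\bf W}+{\bf c})+\Theta({\bf W}-{\bf c})
=\min_{{\boldsymbol\delta}\in\{0,1\}^g}
\bigl[\Theta_{\boldsymbol\delta}({\bf W})+\Theta_{\boldsymbol\delta}({\bf c})\bigr],
\qquad
\Theta_{\boldsymbol\delta}({\bf X})=\min_{{\bf v}\equiv{\boldsymbol\delta}\,(2)}
\bigl[\tfrac14{\bf v}\cdot\Omega{\bf v}+{\bf v}\cdot{\bf X}\bigr],
\end{equation*}
is a valid tropical analogue of Riemann's addition formula: the parallelogram computation $Q({\bf m})+Q({\bf n})=\tfrac12Q({\bf u})+\tfrac12Q({\bf v})$ and the decoupling of the two minima within each parity class both check out, and distributivity of $+$ over $\min$ correctly shows that the class-by-class relation $\Theta_{\boldsymbol\delta}(\boldsymbol{\lambda})=\min\bigl[\Theta_{\boldsymbol\delta}({\bf 0}),\ \Theta_{\boldsymbol\delta}(\boldsymbol{\lambda}+L{\bf e}_1)+L\bigr]$ would imply the lemma. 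This is also a genuinely different route from the one the paper indicates, which is to invoke the tropical Fay trisecant identity of \cite{IT09}.

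The problem is that the proof stops exactly where the lemma's content begins: the class-by-class relation is never proven, only a heuristic about tracking the minimizing lattice vector under translation by $L{\bf e}_1$ is offered. Concretely, three things are missing. First, the inequality $\Theta_{\boldsymbol\delta}(\boldsymbol{\lambda})\le\Theta_{\boldsymbol\delta}({\bf 0})$ (easy: compare with $\pm{\bf v}^*$ for a minimizer ${\bf v}^*$ of $\Theta_{\boldsymbol\delta}({\bf 0})$, since $-{\bf v}^*$ stays in the class). Second, the inequality $\Theta_{\boldsymbol\delta}(\boldsymbol{\lambda})\le\Theta_{\boldsymbol\delta}(\boldsymbol{\lambda}+L{\bf e}_1)+L$: if ${\bf v}$ minimizes the right side, keeping ${\bf v}$ works only when $v_1\ge -1$; for $v_1\le -2$ one must replace ${\bf v}$ by a shift inside its parity class (by $2{\bf e}_1$, and possibly in further coordinates), and whether this decreases the value depends delicately on the tridiagonal entries \eqref{i:omega}, on $p_0=L$, $p_k>0$, and on $\lambda_1<\cdots<\lambda_g$ --- none of that bookkeeping is carried out. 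Third, the attainment of equality in one of the two bounds is not addressed at all. There is also a logical point that raises the stakes: your reduction is one-directional, so the class-by-class relation is \emph{strictly stronger} than the lemma; its truth cannot be inferred from the lemma and must be established independently. (It does hold in low genus: for $g=1$ one has $\Omega=2L$, $\Theta_0(\lambda)=0=\min[0,\,L-2\lambda]$ and $\Theta_1(\lambda)=\tfrac{L}{2}-\lambda=\min\bigl[\tfrac{L}{2},\,\tfrac{L}{2}-\lambda\bigr]$ using $0<\lambda<\tfrac{L}{2}$, and genus-$2$ examples also work out; but the general-$g$ proof is precisely the work that the tropical Fay identity performs in \cite{IT09}, and it is absent here.)
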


This identity corresponds to a tropicalization of the bilinear form.
This lemma is proved by applying the tropical version of 
Fay's trisecant identity for tropical theta functions.
See \cite{IT09} for the detail.

\begin{Example} 
The case of $N=2$.
The curve $\Gamma_C$ is smooth if and only if $C_3 > 2 C_2 > 4 C_1$.
In this simplest case, we can explicitly construct the isomorphism
$\alpha$:
\begin{align*}
\begin{matrix}
\mathcal{T}_C & \stackrel{\alpha}{\to} & \Gamma_C^{\mathrm{cpt}}
& \stackrel{\psi}{\to} & J(\Gamma_C) \\
(Q_1,W_1,Q_2,W_2) & \mapsto & P=(\min(Q_2,W_1), Q_1+W_1) & \mapsto 
& \langle B_1,\overset{\curvearrowright}{P_0 P} \rangle
\end{matrix}.
\end{align*}
The solution \eqref{i:QWsol} induces the inverse map of
$\psi \circ \alpha$. 
Let us consider the case of $C=(0,3,8)$, where $\Gamma_C$ is depicted as 

\begin{center}
\unitlength=1.0mm
\begin{picture}(60,55)(10,-12)

\put(3,5){\vector(1,0){60}}
\put(58,1){\scriptsize$X$}
\put(15,-8){\vector(0,1){50}}
\put(11,40){\scriptsize$Y$}
\put(16,2){\scriptsize$O$}

\put(16,29){\scriptsize$8$}
\put(15,20){\line(-1,0){1}}
\put(12,19){\scriptsize$5$}
\put(15,14){\line(-1,0){1}}
\put(12,13){\scriptsize$3$}
\put(33,5){\line(0,-1){1}}
\put(32,1){\scriptsize$3$}

\put(23,17){\oval(12,10)}
\put(22,11.2){\scriptsize$>$}
\put(21,16){\scriptsize${B_1}$}

\thicklines

\put(15,5){\line(2,1){18}}
\put(33,14){\line(1,0){25}}
\put(33,14){\line(0,1){6}}
\put(15,29){\line(2,-1){18}}
\put(33,20){\line(1,0){25}}
\put(15,29){\line(-1,1){10}}
\put(15,5){\line(-1,-1){10}}
\put(15,5){\line(0,1){24}}

\end{picture}
\end{center}

\noindent
The following is an example of linearization, 
where one sees $\boldsymbol{\lambda} = (3)$.
We set $P_0 = O$:
\begin{align*}
\begin{matrix}
& & \mathcal{T}_C = \{(Q_1,W_1,Q_2,W_2)\} & \stackrel{\alpha}{\to} & 
\Gamma_C^{\mathrm{cpt}} & \stackrel{\psi}{\to} & J(\Gamma_C) \simeq \R/16 \Z 
\\[1mm]
\texttt{t=0}  & & (3,4,0,1) & & (0,7) & & 9
\\
\texttt{t=1} & & (3,1,0,4) & & (0,4) & & 12
\\
\texttt{t=2} & & (1,0,2,5) & & (0,1) & & 15
\\
\texttt{t=3} & & (0,2,3,3) & & (2,2) & & 2 \equiv 18
\\
\texttt{t=4} & & (0,5,3,0) & & (3,5) & & 5 \equiv 21
\end{matrix}
\end{align*}
\end{Example}

For general $N>2$, the isomorphism 
$\mathcal{T}_C \stackrel{\sim}{\to} J(\Gamma_C)$
is regarded as a composition of the injective map
$\alpha: \mathcal{T}_C \to \mathrm{Div}_{\mathrm{eff}}^{g}(\Gamma_C)$
and the abelian integral $\psi$,
but $\alpha$ becomes too complicated.

\section{Periodic BBS}

\subsection{Introduction}

The periodic Box-ball system (pBBS) is a cellular automaton 
defined by adding a periodic boundary condition \cite{YT02}
to the original (infinite) Box-ball system \cite{TS90}.
Let $L$ be the number of boxes aligned on an oriented circle.
Put $M < L/2$ balls into the boxes,
assuming that each box can accommodate one ball at most.
Move the balls with the following rule
which defines the time evolution from $t$ to $t+1$:
\begin{enumerate}
\item 
Connect ``an occupied box whose immediate right is empty" 
and the empty box with an arc.
Do the same for all such boxes. 
\item
In the rest, do the same as (1) by ignoring the connected boxes and arcs.
\item
Continue the same procedure as (2) until all occupied boxes are connected
with empty boxes.
\item
Move all balls to the connected empty boxes.
\end{enumerate}
In the process, the term ``right" is understood 
along the direction of the orientation of the circle. 
This evolution is determined uniquely and invertible.  
Let us show an example of $L=11$ and $M=4$ in the following,
where we identify the left and right boundaries with thick lines.
The above rule works as

\begin{center}
\unitlength=1mm
\begin{picture}(140,30)(10,30)

\put(35,51){\texttt{t=0}}
\put(50,55){\line(1,0){55}}
\put(50,50){\line(1,0){55}}
\multiput(50,50)(5,0){12}{\line(0,1){5}}
\thicklines
\multiput(50,50)(0.2,0){2}{\line(0,1){5}}
\multiput(105,50)(0.2,0){2}{\line(0,1){5}}

\thinlines

\put(52.5,52.5){\circle*{3}}
\put(57.5,52.5){\circle*{3}}
\put(62.5,52.5){\circle*{3}}

\put(77.5,52.5){\circle*{3}}


\qbezier(62.5,50)(65,45)(67.5,50)
\qbezier(77.5,50)(80,45)(82.5,50)

\qbezier(57.5,50)(65,41)(72.5,50)

\qbezier(52.5,50)(70,35)(87.5,50)


\put(35,35){\texttt{t=1}}
\put(50,39){\line(1,0){55}}
\put(50,34){\line(1,0){55}}
\multiput(50,34)(5,0){12}{\line(0,1){5}}
\thicklines
\multiput(50,34)(0.2,0){2}{\line(0,1){5}}
\multiput(105,34)(0.2,0){2}{\line(0,1){5}}

\thinlines
\put(67.5,36.5){\circle*{3}}
\put(72.5,36.5){\circle*{3}}

\put(82.5,36.5){\circle*{3}}
\put(87.5,36.5){\circle*{3}}

\end{picture}
\end{center}
and we obtain more as follows:
\begin{center}
\unitlength=1.0mm
\begin{picture}(90,35)(0,8)

\thinlines
\put(0,35){\texttt{t=2}}
\put(15,39){\line(1,0){55}}
\put(15,34){\line(1,0){55}}
\multiput(20,34)(5,0){10}{\line(0,1){5}}
\thicklines
\multiput(15,34)(0.2,0){2}{\line(0,1){5}}
\multiput(70,34)(0.2,0){2}{\line(0,1){5}}

\put(42.5,36.5){\circle*{3}} 
\put(57.5,36.5){\circle*{3}} 
\put(62.5,36.5){\circle*{3}} 
\put(67.5,36.5){\circle*{3}} 


\thinlines
\put(0,27){\texttt{t=3}}
\put(15,31){\line(1,0){55}}
\put(15,26){\line(1,0){55}}
\multiput(20,26)(5,0){10}{\line(0,1){5}}
\thicklines
\multiput(15,26)(0.2,0){2}{\line(0,1){5}}
\multiput(70,26)(0.2,0){2}{\line(0,1){5}}

\put(47.5,28.5){\circle*{3}} 
\put(27.5,28.5){\circle*{3}} 
\put(17.5,28.5){\circle*{3}} 
\put(22.5,28.5){\circle*{3}} 


\thinlines
\put(0,19){\texttt{t=4}}
\put(15,23){\line(1,0){55}}
\put(15,18){\line(1,0){55}}
\multiput(20,18)(5,0){10}{\line(0,1){5}}
\thicklines
\multiput(15,18)(0.2,0){2}{\line(0,1){5}}
\multiput(70,18)(0.2,0){2}{\line(0,1){5}}

\put(32.5,20.5){\circle*{3}} 
\put(37.5,20.5){\circle*{3}} 
\put(42.5,20.5){\circle*{3}} 
\put(52.5,20.5){\circle*{3}} 


\thinlines
\put(0,11){\texttt{t=5}}
\put(15,15){\line(1,0){55}}
\put(15,10){\line(1,0){55}}
\multiput(20,10)(5,0){10}{\line(0,1){5}}
\thicklines
\multiput(15,10)(0.2,0){2}{\line(0,1){5}}
\multiput(70,10)(0.2,0){2}{\line(0,1){5}}

\put(47.5,12.5){\circle*{3}} 
\put(57.5,12.5){\circle*{3}} 
\put(62.5,12.5){\circle*{3}} 
\put(67.5,12.5){\circle*{3}} 

\end{picture}
\end{center}
One can observe that the larger series of balls overtakes 
the smaller one repeatedly.
We call a series of balls as a {\it soliton}.
When a state has $g$ solitons, we call the state as a $g$-soliton state.
The system has a finite configuration space with $\binom{L}{M}$ states, 
thus any state comes back to itself in a finite time.
We identify an occupied box and an empty box with $1$ and $0$ respectively, 
and define the phase space of the pBBS by 
\begin{align}
\mathcal{U}_{\rm BBS} = 
\left\{(U_k)_{k \in \Z_L} ~|~ U_k \in \{0,1\}, 
~\sum_{k=1}^L U_k < \frac{L}{2} \right\}.
\end{align}
We write $U_k^t$ for the number of balls in the $k$-th box at time $t$.

The pBBS has the conserved quantities $\mu =(\mu_1,\mu_2,\mu_3,\ldots)$
constructed as the by-product of the evolution rule;
let $\mu_1$ (resp. $\mu_2$) be the number of the arcs drown 
at the step (1) (resp. (2)), and so on.
Let us calculate $\mu$ of the above example,
where the arc with the number $i$ contributes to
$\mu_i$:

\begin{center}
\unitlength=1mm
\begin{picture}(140,35)(10,25)

\put(35,51){\texttt{t=0}}
\put(50,55){\line(1,0){55}}
\put(50,50){\line(1,0){55}}
\multiput(50,50)(5,0){12}{\line(0,1){5}}
\thicklines
\multiput(50,50)(0.2,0){2}{\line(0,1){5}}
\multiput(105,50)(0.2,0){2}{\line(0,1){5}}

\thinlines

\put(52.5,52.5){\circle*{3}}\put(53,47){\scriptsize$3$} 
\put(57.5,52.5){\circle*{3}}\put(58,47){\scriptsize$2$}  
\put(62.5,52.5){\circle*{3}}\put(63,47){\scriptsize$1$}  

\put(77.5,52.5){\circle*{3}}\put(78,47){\scriptsize$1$}   


\qbezier(62.5,50)(65,45)(67.5,50)
\qbezier(77.5,50)(80,45)(82.5,50)

\qbezier(57.5,50)(65,41)(72.5,50)

\qbezier(52.5,50)(70,35)(87.5,50)


\put(35,35){\texttt{t=1}}
\put(50,39){\line(1,0){55}}
\put(50,34){\line(1,0){55}}
\multiput(50,34)(5,0){12}{\line(0,1){5}}
\thicklines
\multiput(50,34)(0.2,0){2}{\line(0,1){5}}
\multiput(105,34)(0.2,0){2}{\line(0,1){5}}

\thinlines
\put(67.5,36.5){\circle*{3}}\put(68,31){\scriptsize$3$}   
\put(72.5,36.5){\circle*{3}}\put(73,31){\scriptsize$1$}    

\put(82.5,36.5){\circle*{3}}\put(83,31){\scriptsize$2$}     
\put(87.5,36.5){\circle*{3}}\put(88,31){\scriptsize$1$}      


\qbezier(72.5,34)(75,29)(77.5,34)
\qbezier(87.5,34)(90,29)(92.5,34)

\qbezier(82.5,34)(90,25)(97.5,34)

\qbezier(67.5,34)(85,19)(102.5,34)

\end{picture}
\end{center}
We see that $\mu = (2,1,1)$
is invariant under the evolution.

By the definition, $\mu_1$ is the number of the solitons in 
a state.
We introduce an equivalent expression $\lambda$ to $\mu$ by 
$$
\lambda=(\lambda_1, \lambda_2,\ldots, \lambda_{\mu_1}),
\qquad \lambda_j = \#\{\mu_k ~|~ \mu_k \geq \mu_1+1-j \}.
$$
(Do not confuse these $\lambda_j$ with those at \eqref{i:C-lambda}.) 
We have $\lambda_1 \leq \lambda_2 \leq \cdots \leq \lambda_{\mu_1}$ 
and $\sum_{i} \mu_i = \sum_{j=1}^{\mu_1} \lambda_j < \frac{L}{2}$.
In the above example, we get $\lambda=(1,3)$.
We regard $\lambda$ as the map from 
$g$-soliton states in $\mathcal{U}_{\rm BBS}$ to $(\Z_{\geq 0})^g$,
and write $\lambda(U)$ for the image of a 
$g$-soliton state $U \in \mathcal{U}_{\rm BBS}$.

\subsection{General solution.}

We fix $\lambda = (\lambda_1, \lambda_2, \ldots,\lambda_g)$
as $\lambda_1 \leq \lambda_2 \leq \cdots \leq \lambda_g$,
and define the isolevel set: 
$$
\mathcal{U}_{{\rm BBS},\lambda} = \{ U \in \mathcal{U}_{\rm BBS}
 ~|~\lambda(U) = \lambda \}.
$$

The symmetry of the pBBS is described by 
the $\widehat{\mathfrak{sl}}_2$ crystal, and  
the general solution was obtained as follows
(For an introductive review, see \cite{IKT11}.):

\begin{Theorem}\cite{KS06,KTT06} 
\label{th:pBBS}
For $\lambda=(\lambda_1< \cdots <\lambda_g)$, define
$p_k \in \Z_{>0} ~(k=1,\ldots,g)$ and 
$A=(A_{ij})_{i,j=1,\ldots,g} \in \mathrm{Mat}(g;\Z)$ by
\begin{align}\label{A}
p_k = L - 2 \sum_{j=1}^{g} \min(\lambda_k,\lambda_j),
\qquad 
A_{ij} = p_i \delta_{ij} + 2 \min[\lambda_i, \lambda_j].
\end{align}
Then $A$ is symmetric and positive definite.
Define the $g$-dimensional torus $J(L,\lambda)$ as
\begin{align}\label{pBBS-J}
J(L,\lambda) = \R^g / A \Z^g.
\end{align}
Then we have the following:
\\
(i) There is one-to-one correspondence 
between the isolevel set $\mathcal{U}_{{\rm BBS},\lambda}$ and 
the integer points in $J(L,\lambda)$ 
induced by what is called the Kerov-Kirillov-Reshetekhin bijection. 
Let $\Phi$ be the corresponding embedding 
$\Phi: ~\mathcal{U}_{{\rm BBS},\lambda} \to J(L,\lambda)$.
Via $\Phi$, the time evolution of the pBBS is linearized
on $J(L,\lambda)$, whose velocity is $\lambda$.
\\
(ii)
For $U^0 = (U_k^0)_{k \in \Z_L} \in \mathcal{U}_{{\rm BBS},\lambda}$,
set ${\bf Z}_0 = \Phi(U^0) - \frac{{\bf p}}{2}$.
Then $U^t = (U_k^t)_{k \in \Z_L}$ is written in terms of 
tropical theta function $\Theta({\bf Z})=\Theta({\bf Z};A)$ as
\begin{align*}
U_k^t = 
&- \Theta\bigl({\bf Z}_0 - k{\bf v}_1 + t \lambda \bigr)
+ \Theta\bigl({\bf Z}_0 - (k-1){\bf v}_1 + t \lambda \bigr)
\\
&+\Theta\bigl({\bf Z}_0 - k{\bf v}_1 + (t+1) \lambda \bigr)
- \Theta\bigl({\bf Z}_0 -(k-1){\bf v}_1 + (t+1) \lambda \bigr),
\end{align*}
where ${\bf v}_1 = (1,1,\ldots,1) \in \R^g$.
\end{Theorem}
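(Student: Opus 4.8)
The plan is to treat the three assertions in turn: the positivity of $A$ directly, part (i) by importing the combinatorial backbone from $\widehat{\mathfrak{sl}}_2$ crystal theory, and part (ii) by the same route already used for the trop-pToda in Theorem \ref{i:th:pToda}. I would first dispose of the matrix claim. Writing $A = D + 2M$ with $D = \mathrm{diag}(p_1,\ldots,p_g)$ and $M = (\min(\lambda_i,\lambda_j))_{i,j}$, symmetry is immediate since $\min$ is symmetric. For positive definiteness, note that $M$ is the Gram matrix of the nested indicators $\mathbf{1}_{[0,\lambda_i]}$ in $L^2(\R_{\geq 0})$, because $\langle \mathbf{1}_{[0,\lambda_i]}, \mathbf{1}_{[0,\lambda_j]}\rangle = \min(\lambda_i,\lambda_j)$; distinctness of the $\lambda_i$ makes these linearly independent, so $M$ is positive definite, and with $p_i > 0$ one gets $x^\top A x = \sum_i p_i x_i^2 + 2\,x^\top M x > 0$ for $x \neq 0$. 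I would also record here the identity $A\mathbf{1} = L\mathbf{1}$, i.e. $\sum_j A_{ij} = p_i + 2\sum_j \min(\lambda_i,\lambda_j) = L$, which is what later forces the solution to be $L$-periodic in the site index.

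The heart of part (i) is the Kerov--Kirillov--Reshetikhin (KKR) correspondence. On the line a box-ball state is encoded by a rigged configuration whose underlying partition is the soliton content $\lambda$ and whose riggings $J_i$ range within bounds set by the vacancy numbers $p_i$. For the periodic system I would pass to this rigged-configuration picture and then account for the boundary: the riggings are identified modulo the sublattice spanned by the columns of $A$, the off-diagonal entries $2\min(\lambda_i,\lambda_j)$ recording the mutual winding of solitons around the circle. Showing that the admissible riggings modulo this identification are exactly the integer points of $J(L,\lambda) = \R^g/A\Z^g$ produces the embedding $\Phi$, and linearization then follows because the time-evolution operator commutes with the KKR map and advances each angle variable by its soliton velocity, giving $\Phi(T\,U) = \Phi(U) + \lambda \pmod{A\Z^g}$.

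Granting (i), I would obtain the closed form in (ii) exactly as in the trop-pToda case. Setting $\tau(k,t) = \Theta(\mathbf{Z}_0 - k\mathbf{v}_1 + t\lambda)$, the asserted $U_k^t$ is the mixed second difference $[\tau(k-1,t) - \tau(k,t)] - [\tau(k-1,t+1) - \tau(k,t+1)]$, the ultradiscrete analogue of a $\partial_x\partial_t \log\tau$ formula. Three things then require checking: that $U_k^t \in \{0,1\}$, which comes from the piecewise-linear structure of $\Theta$; that $U_k^t$ is $L$-periodic in $k$, which is precisely where $A\mathbf{1} = L\mathbf{1}$ enters through the quasi-periodicity \eqref{i:theta-quasi} of $\Theta$; and that the evolution rule holds, which I would reduce to a tropical bilinear identity for $\Theta$ of the same shape as the Lemma invoked for the trop-pToda, i.e. a tropical Fay identity. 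The normalization $\mathbf{Z}_0 = \Phi(U^0) - \mathbf{p}/2$ is then pinned down by matching the $t = 0$ slice to $U^0$ under $\Phi$.

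The main obstacle is concentrated in part (i), and specifically in the periodic boundary. On the line the KKR bijection and the linear motion of the riggings are classical, but the periodic case requires controlling the global winding of solitons as they overtake one another around the circle; this is exactly what promotes the naive box of riggings to the torus $\R^g/A\Z^g$ and produces the full matrix $A$ rather than merely its diagonal. Pinning down that the lattice is precisely $A\Z^g$ and that the advance is by $\lambda$ is the crux, and it is here that the crystal-theoretic machinery of \cite{KS06,KTT06} does the real work; the positivity of $A$ and the theta formula are comparatively formal once this is secured.
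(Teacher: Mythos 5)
There is an important mismatch of expectations here: the paper contains no proof of this theorem at all. It is quoted, with the citation \cite{KS06,KTT06}, as known background from the crystal-theory literature (the paper's own contributions come afterwards, in the tropical-KdV and spectral-curve analysis of \S 4.3--4.4), so there is no internal argument to compare your proposal against; the only fair comparison is with the cited references themselves. Judged on that basis, your sketch is accurate where it is self-contained and honest where it is not. The Gram-matrix argument for positive definiteness is correct: $\bigl(\min(\lambda_i,\lambda_j)\bigr)_{i,j}$ is the Gram matrix of the indicators $\mathbf{1}_{[0,\lambda_i]}$, which are linearly independent because the $\lambda_i$ are distinct and positive, and $p_k>0$ indeed follows from $\sum_j\lambda_j<L/2$. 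The row-sum identity $A\mathbf{1}=L\mathbf{1}$ is also correct, and it is genuinely the mechanism making the formula in (ii) $L$-periodic in $k$: under $k\mapsto k+L$ each argument shifts by the lattice vector $A\mathbf{1}$, and the affine corrections from the quasi-periodicity \eqref{i:theta-quasi} cancel in the mixed second difference. Your route for (i) --- rigged configurations with vacancy numbers $p_i$, riggings identified modulo a sublattice, linearization because the evolution commutes with the Kerov--Kirillov--Reshetikhin map --- is exactly the inverse-scattering strategy of \cite{KTT06}, and your route for (ii) via a tropical Fay-type bilinear identity parallels both \cite{KS06} and what \cite{IT09} does for the trop-pToda (the paper's Theorem \ref{i:th:pToda}).

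The one caveat you should be explicit about is that your part (i) is a plan rather than a proof. The assertions that the identification lattice is precisely $A\Z^g$ (not some finite-index sublattice), that $\Phi$ is well defined and bijective onto the integer points, and that the velocity is exactly $\lambda$ constitute the entire technical content of \cite{KS06,KTT06}; your text names these as the crux and defers them to those references rather than supplying an argument. Since the theorem is itself presented in the paper as a citation, this deference leaves your treatment on essentially the same footing as the paper's own, with the modest bonus that you verified the algebraic preliminaries (symmetry, positive definiteness, $A\mathbf{1}=L\mathbf{1}$) that the paper also leaves unproved.
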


\begin{Remark}
The $g$-dimensional torus \eqref{pBBS-J} is 
the principally polarized tropical abelian variety
(Definition \ref{i:def:theta}).
\end{Remark}

\begin{Remark}\label{rem:Tvm}
There is a family of commutative and invertible time evolutions 
$\{T_m\}_{m \in \Z_{>0}}$ on $\mathcal{U}_{{\rm BBS},\lambda}$ 
\cite[Th. 3.2]{FOY00},
and $T_m$ induces the linear motion on $J(L,\lambda)$
of the velocity ${\bf v}_m = (\min[m,\lambda_j])_{j=1,\ldots,g} \in \Z^g$
\cite[Th. 3.11]{KTT06}.
For $m \geq \lambda_g$, $T_m$ gives the original evolution of the pBBS,
namely, ${\bf v}_m =\lambda$.
\end{Remark}

We will use the following lemma in \S \ref{subsec:4-3}:

\begin{Lemma}\label{lem:U}
Fix $\lambda=(\lambda_1< \cdots <\lambda_g)$.
\\
(i) There is a state $U_0 \in \mathcal{U}_{{\rm BBS},\lambda}$ 
without soliton scattering, 
i.e. the set composed of the lengths of $g$ solitons 
coincides with $\{\lambda_1, \cdots, \lambda_g \}$. 
\\
(ii) For any state $U \in \mathcal{U}_{{\rm BBS},\lambda}$,
there is a sequence of evolutions 
$T := T_{l_1}^{n_1} T_{l_2}^{n_2} \cdots T_{l_k}^{n_k}~
(1 \leq l_1,\ldots,l_k \leq \lambda_g, ~ n_1,\ldots,n_k \in \Z \setminus \{0\})$
such that $T(U) = U_0$. 
\end{Lemma}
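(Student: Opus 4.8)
The plan is to prove the two parts by different means: part (i) by an explicit combinatorial construction of a separated configuration, and part (ii) by transporting the statement to the Jacobian torus $J(L,\lambda)$ through the embedding $\Phi$ of Theorem \ref{th:pBBS} and reducing it to a lattice-generation fact about the velocity vectors ${\bf v}_m$ of Remark \ref{rem:Tvm}. For (i), I would write down the candidate directly: arrange the $g$ solitons as disjoint blocks of $\lambda_1,\lambda_2,\ldots,\lambda_g$ consecutive $1$'s around the oriented circle, separated by blocks of $0$'s. Since $\sum_j\lambda_j=M<L/2$, there are strictly more empty boxes than balls, so after placing the $M$ balls there remain $L-M>M$ zeros to distribute among the $g$ gaps. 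I would enlarge the gaps so that no two adjacent blocks interact under the $10$-elimination (arc) procedure; a convenient sufficient separation is that each gap be at least the length of the shorter of its two neighbouring blocks, and this is always affordable because $\sum_j\min(\lambda_j,\lambda_{j+1})\le\sum_j\lambda_j=M<L-M$.

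For such a separated state $U_0$, the arc procedure of \S 4.1 peels exactly one layer off each block at each step without ever joining two different blocks, so the multiplicities are $\mu_i=\#\{j:\lambda_j\ge i\}$, i.e. $\mu$ is the conjugate partition of the block lengths. Feeding this into the definition $\lambda_k=\#\{\mu_i\ge\mu_1+1-k\}$ recovers the block lengths in increasing order, whence $\lambda(U_0)=\lambda$; and since the blocks are isolated, the set of soliton lengths is literally $\{\lambda_1,\ldots,\lambda_g\}$, so $U_0$ is scattering-free. This establishes (i).

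For (ii), the input is Remark \ref{rem:Tvm}: each $T_m$ is invertible and, under $\Phi$, acts on $J(L,\lambda)=\R^g/A\Z^g$ as translation by ${\bf v}_m=(\min[m,\lambda_j])_{j=1,\ldots,g}$. Hence for a word $T=T_{l_1}^{n_1}\cdots T_{l_k}^{n_k}$ (the $T_m$ commute) one has $\Phi(T(U))=\Phi(U)+\sum_r n_r{\bf v}_{l_r}\bmod A\Z^g$, and because $\Phi$ is injective onto the integer points $\Z^g/A\Z^g$, the claim $T(U)=U_0$ reduces to realizing the displacement $\Phi(U_0)-\Phi(U)\in\Z^g/A\Z^g$ as such a combination with $1\le l_r\le\lambda_g$. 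It therefore suffices to show that ${\bf v}_1,\ldots,{\bf v}_{\lambda_g}$ generate $\Z^g$. I would examine the successive differences: for $1\le m\le\lambda_g$, with ${\bf v}_0:=0$,
$$
{\bf v}_m-{\bf v}_{m-1}=\bigl(\min[m,\lambda_j]-\min[m-1,\lambda_j]\bigr)_{j=1,\ldots,g},
$$
whose $j$-th entry equals $1$ if $\lambda_j\ge m$ and $0$ otherwise. Because $\lambda_1<\cdots<\lambda_g$, for each $k$ the value $m=\lambda_k$ lies in a nonempty range on which this difference is the step vector ${\bf e}_{\ge k}:=(0,\ldots,0,1,\ldots,1)$ with ones in positions $k,\ldots,g$. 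Thus all of ${\bf e}_{\ge 1},\ldots,{\bf e}_{\ge g}$ appear among integer combinations of the ${\bf v}_m$; since ${\bf e}_{\ge g}={\bf e}_g$ and ${\bf e}_{\ge k}-{\bf e}_{\ge k+1}={\bf e}_k$, the standard basis is recovered and the ${\bf v}_m$ generate $\Z^g$. Consequently translations by them act transitively on $\Z^g/A\Z^g$, and a suitable word carries $U$ to $U_0$. (One drops any zero exponents to meet $n_r\ne0$, and if $U=U_0$ one uses that ${\bf v}_1$ has finite order in $\Z^g/A\Z^g$ to exhibit a nontrivial $T$ fixing $U_0$.)

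I expect the main obstacle to be part (i): pinning down the exact separation threshold under which the $10$-elimination does not merge neighbouring blocks, and verifying \emph{directly} that the conserved quantity of the separated state is $\lambda$ rather than some finer partition produced by interaction, as genuinely happens when the gaps are too small (for instance two length-two blocks separated by a single $0$ yield content $(1,3)$, not $(2,2)$). Once a scattering-free $U_0$ is secured, part (ii) is the essentially formal lattice computation above, whose only real input is the strict monotonicity $\lambda_1<\cdots<\lambda_g$, which is exactly what guarantees that every step vector ${\bf e}_{\ge k}$ is produced.
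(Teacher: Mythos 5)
Your part (ii) is correct and is essentially the paper's own argument: reduce, via Theorem \ref{th:pBBS} and Remark \ref{rem:Tvm}, to showing that ${\bf v}_1,\ldots,{\bf v}_{\lambda_g}$ generate $\Z^g$, and produce the standard basis from consecutive differences of the ${\bf v}_m$. The paper uses the identity $\sum_{k=i}^g {\bf e}_k = {\bf v}_{\lambda_{i-1}+1}-{\bf v}_{\lambda_{i-1}}$ (with $\lambda_0=0$, ${\bf v}_0=0$), which is exactly your step vectors ${\bf e}_{\ge k}$; your additional remarks on the injectivity of $\Phi$, on dropping zero exponents, and on the case $U=U_0$ are legitimate refinements of points the paper leaves implicit.

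Part (i) contains a genuine flaw, located exactly where you predicted the main obstacle would be: your separation threshold is insufficient. Requiring each gap to be at least the length of the \emph{shorter} of its two neighbouring blocks does not prevent merging, because interactions are not only between adjacent blocks: a short block can be consumed early in the arc procedure, after which its two (long) neighbours become adjacent in the reduced word with essentially no separation left. Concretely, take $L=21$ and $\lambda=(1,4,5)$, so $M=10<L/2$, and place the blocks in your prescribed cyclic order $1,4,5$ with gaps $1,9,1$:
\[
U \;=\; 1\,0\,1111\,000000000\,11111\,0 \quad(\text{cyclically}).
\]
Every gap satisfies your criterion ($1\ge\min(1,4)$, $9\ge\min(4,5)$, $1\ge\min(5,1)$), yet after the first step of the arc procedure the remnants of the $5$-block and the $4$-block become adjacent in the reduced word and merge into a single run of length $7$; one finds $\mu=(3,1,1,1,1,1,1,1)$, hence $\lambda(U)=(1,1,8)\neq(1,4,5)$. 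The repair is to demand instead that each gap be at least the length of the block immediately \emph{preceding} it. Then the $j$-th nested arc of each block lands on the $j$-th zero of that block's own trailing gap, no arcs ever leave a block--gap pair, so $\mu_i=\#\{j:\lambda_j\ge i\}$ and the conjugate recovers $\{\lambda_1,\ldots,\lambda_g\}$ with no scattering. Affordability is your own counting argument, unchanged: the requirements sum to $\sum_j\lambda_j=M$, while the available zeros number $L-M>M$. With this correction your construction proves (i); note that the paper itself disposes of (i) in one line (``it follows from $\sum_k U_k<\frac{L}{2}$''), leaving precisely this construction to the reader, so your fixed version is a useful expansion rather than a detour.
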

\begin{proof}
(i) It follows from the condition $\sum_{k} U_k < \frac{L}{2}$
of $\mathcal{U}_{\rm BBS}$.
(ii) Due to Theorem \ref{th:pBBS} and Remark \ref{rem:Tvm}, 
it is enough to prove that 
$\oplus_{1 \leq m \leq \lambda_g} \Z {\bf v}_m$
includes the basis
$\{{\bf e}_i = (\underbrace{0,\ldots,0}_{i-1},1,\underbrace{0,\ldots,0}_{g-i})
~|~ i=1,\ldots,g \}$ of $\Z^g$.
We actually have 
$$
\sum_{k=i}^g {\bf e}_{k} = -{\bf v}_{\lambda_{i-1}} + {\bf v}_{\lambda_{i-1}+1}
\qquad i=1,\ldots,g,
$$
where we assume $\lambda_0 = 0$ and ${\bf v}_0 = (0,\cdots,0) \in \Z^g$.
Hence the claim follows.
\end{proof}

In the following sections, we are to clarify the tropical 
geometrical origin of $J(L,\lambda)$.

\subsection{Tropical periodic KdV equation and pBBS}

The discrete KdV equation is given by the evolution equation 
\cite{H77}:
\begin{align}\label{KdV}
u_k^{t+1} + \frac{\delta}{u_{k-1}^{t+1}} 
= 
u_{k-1}^{t} + \frac{\delta}{u_{k}^{t}} \quad (k,t \in \Z),
\end{align}
where $\delta$ is a constant element.
We assume the periodic boundary condition $u_{k+L}^t\equiv u_k^t$ for each $k,t$.
Naturally, we regard the index $k$ as an element of $\Z_L$.

The discrete KdV equation (\ref{KdV}) is equivalent to the matrix equation: 
\begin{align}\label{eq4.5}
R^{t+1}S^{t+1}=S^tR^t,
\end{align}
where $S^t=S^t(y)$ and $R^t=R^t(y)$ are elements of $\mathrm{Mat}(L;K[y])$ which are defined by 
\[
S^t=\left(\begin{array}{@{\,}ccccc@{\,}}
	u_1^t & 1 &  &  &  \\
	 & u_2^t & 1 &  &  \\
	 &  & \ddots & \ddots &  \\
	 &  &  &  u_{L-1}^t& 1 \\
	(-1)^{L-1}y &  &  &  & u_L^t
\end{array}\right),\quad
R^t=\left(\begin{array}{@{\,}ccccc@{\,}}
	\frac{\delta}{u_1^t} & 1 &  &  &  \\
	 & \frac{\delta}{u_2^t} & 1 &  &  \\
	 &  & \ddots & \ddots &  \\
	 &  &  &  \frac{\delta}{u_{L-1}^t}& 1 \\
	(-1)^{L-1}y &  &  &  & \frac{\delta}{u_L^t}
\end{array}\right).
\]

Put $X^t(y):=R^t(y)S^t(y)$. Then the equation (\ref{eq4.5}) is rewritten as 
\begin{align}
X^{t+1}S^t=S^tX^t,
\end{align}
which implies that the characteristic polynomial $f(x,y)=\det(X^t(y)+x\cdot\mathrm{id.})$
is invariant under the time evolution.
($-x$ is the eigenvalue of $X(y)$).
The algebraic curve defined by the polynomial $f$ 
is called the \textit{spectral curve} of periodic discrete KdV.
The following lemma follows from \cite[Th. II.1]{MIT05}.

\begin{Lemma}\label{lem:KdV-curve}
Let $M$ be the maximum integer satisfying $M<\frac{L}{2}$.
The polynomial $f$ is written as:
\[
f(x,y)=y^2+y(c_Mx^M+\cdots+c_1x+c_0)+(x+\delta)^L,
\]
where $c_i$ $(i=0,1,\dots,M)$ are rational functions in 
$\R_{>0}(u_n^t,\delta)$ or $\R_{<0}(u_n^t,\delta)$.
\end{Lemma}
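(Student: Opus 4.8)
The plan is to compute the characteristic polynomial $f(x,y) = \det(X^t(y) + x \cdot \mathrm{id.})$ directly, where $X^t(y) = R^t(y) S^t(y)$, and extract both its $y$-degree structure and the shape of its coefficients. First I would observe that $S^t$ and $R^t$ are both ``almost bidiagonal'' matrices: each has a diagonal (the $u_n^t$ or $\delta/u_n^t$), a superdiagonal of $1$'s, and a single entry $(-1)^{L-1} y$ in the bottom-left corner carrying the spectral parameter. A direct product $X^t = R^t S^t$ then has a controlled band structure, and the determinant $\det(X^t + x\cdot\mathrm{id.})$ can be expanded along the spectral variable $y$. The key structural point is that $y$ enters only through those two corner entries, so $f$ is at most quadratic in $y$: the $y^2$ term comes from multiplying the two $y$-bearing entries, the $y^0$ term is the determinant with the $y$-entries deleted, and the $y^1$ term collects the cross contributions. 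This matches the stated form $f = y^2 + y(\cdots) + (\text{const in } y)$.

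Next I would identify each coefficient. The $y^2$ coefficient should reduce to $1$ after tracking the signs $(-1)^{L-1}$ and the cofactor structure; this is a finite sign bookkeeping. For the constant-in-$y$ term, I would set $y=0$, so that $S^t$ and $R^t$ become genuinely upper-bidiagonal (lower-triangular after accounting for the corner now being zero); then $\det(R^t S^t + x\cdot\mathrm{id.})$ factors cleanly. I expect this determinant to telescope into $\prod_{n=1}^L (x + \tfrac{\delta}{u_n^t})(\cdots)$ or, more precisely using the equation of motion and the relation $\delta/u$ on the diagonal, to collapse to $(x+\delta)^L$; this is exactly where I would invoke the cited result \cite[Th. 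II.1]{MIT05} rather than rederive the telescoping by hand. The $y^1$ coefficient is a polynomial in $x$ of degree $M$, whose coefficients $c_i$ are the nontrivial conserved-quantity-generating functions; the degree bound $M < L/2$ should fall out of a Newton-polygon or band-width count on the cofactors that multiply the corner entries.

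The sign/positivity claim --- that each $c_i$ lies in $\R_{>0}(u_n^t,\delta)$ or $\R_{<0}(u_n^t,\delta)$ --- is the subtle part and, I expect, the main obstacle. Each $c_i$ is a sum of products of entries of $R^t$ and $S^t$ picked up in the cofactor expansion, and a priori such a sum of monomials in $u_n^t$ and $\delta/u_n^t$ need not have constant sign. The content of the lemma is that, after the cancellations forced by the bidiagonal structure, every surviving monomial in a given $c_i$ carries the \emph{same} sign. I would establish this by a careful combinatorial description of which permutations (equivalently, which lattice paths through the band) contribute to the coefficient of $x^i y^1$, showing that they all have the same permutation sign once the $(-1)^{L-1}$ corner factors are incorporated. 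Again, since the lemma is stated to follow from \cite[Th. II.1]{MIT05}, the cleanest route is to phrase the computation so that it reduces to the determinantal identity proved there and simply read off the $y$-expansion and the sign of the path-weights, rather than re-proving the positivity combinatorially from scratch.
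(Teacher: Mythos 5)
Your proposal is correct and is essentially the paper's own route: the paper offers no independent argument at all, but simply derives the lemma from \cite[Th. II.1]{MIT05}, which is exactly where your sketch also defers the two nontrivial points (the path/cofactor description of the $y$-linear coefficients and the sign-definiteness of the $c_i$), while your elementary observations about the band structure, the quadratic $y$-dependence, and the leading coefficient $1$ are all sound. One simplification you missed: the constant term needs neither telescoping, the equation of motion, nor the citation, since at $y=0$ both $R^t$ and $S^t$ are upper triangular (not lower triangular, as your parenthetical says), hence so is $X^t|_{y=0}$, with every diagonal entry equal to $\tfrac{\delta}{u_n^t}\,u_n^t=\delta$, giving $\det\bigl(X^t|_{y=0}+x\cdot\mathrm{id.}\bigr)=(x+\delta)^L$ at once.
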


The pBBS is obtained as a tropical KdV equation with the periodic boundary condition.
Let $u_k^t,\delta\in K=\overline{\C((t))}$ and
$\val(u_k^t) = U_k^t$, $\val(\delta) = 1$.
Let $\mathcal{U}$ be the subset of $\Q^L$ defined by
$$
\mathcal{U} 
= 
\{(U_k)_{k \in \Z_L} ~|~ U_k \in \Q,
~\sum_{k=1}^L U_k < \frac{L}{2} \},
$$
which includes $\mathcal{U}_{\mathrm{BBS}}$.

\begin{Lemma}\label{lem:P}
Let 
\[
P_{k,m}:=\prod_{l=0}^m\left(\frac{\delta}{u_{k-l}^tu_{k-l-1}^t}\right),\qquad
P:=P_{1,L}=\prod_{l=1}^L\left(\frac{\delta}{u_{l}^tu_{l-1}^t}\right).
\]
When $(U_k^t)_{k\in\Z_L}$ and $(U_k^{t+1})_{k\in\Z_L}$ 
are contained in $\mathcal{U}$,
the periodic discrete KdV $(\ref{KdV})$ is equivalent to:
\begin{gather}\label{KdV2}
u_{k}^{t+1}=\frac{\delta}{u_k^t}\!\left(\!
1\!+\!
\frac{1-P}
{P_{k,0}+P_{k,1}+\cdots+P_{k,L-1}}
\!\right).
\end{gather}
\end{Lemma}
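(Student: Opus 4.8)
The plan is to treat \eqref{KdV} as a first-order recursion for the unknowns $u_k^{t+1}$ with the time-$t$ data frozen, to solve that recursion explicitly under the periodic boundary condition, and finally to use the constraint defining $\mathcal{U}$ to single out the physically relevant solution. First I would abbreviate the right-hand side of \eqref{KdV} as $g_k := u_{k-1}^t + \frac{\delta}{u_k^t}$, so that the equation reads $u_k^{t+1} + \frac{\delta}{u_{k-1}^{t+1}} = g_k$. Substituting the change of variables $u_k^{t+1} = \frac{\delta}{u_k^t}\, r_k$ (legitimate since $u_k^t,\delta\neq 0$) and simplifying, the equation becomes the scalar recursion $r_k - 1 = \frac{1}{P_{k,0}}\cdot\frac{r_{k-1}-1}{r_{k-1}}$, where $P_{k,0}=\frac{\delta}{u_k^t u_{k-1}^t}$. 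Setting $\sigma_k := (r_k-1)^{-1}$ linearizes this into the affine recursion $\sigma_k = P_{k,0}(1+\sigma_{k-1})$.

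Next I would solve the affine recursion under periodicity. The elementary identity $\prod_{i=0}^{j} P_{k-i,0} = P_{k,j}$, immediate from the definition of $P_{k,m}$, lets me iterate $L$ steps to obtain $\sigma_k = \sum_{j=0}^{L-1} P_{k,j} + P_{k,L-1}\,\sigma_{k-L}$. Since the indices wrap over a full period, $P_{k,L-1} = \prod_{i=0}^{L-1}P_{k-i,0} = P$, and periodicity gives $\sigma_{k-L}=\sigma_k$; writing $S_k := \sum_{j=0}^{L-1}P_{k,j}$, this reads $\sigma_k = S_k + P\,\sigma_k$. The hypothesis $\sum_k U_k^t < L/2$ enters precisely here: it forces $\val(P) = L - 2\sum_k U_k^t > 0$, so $P\neq 1$ and I may divide to get $\sigma_k = \frac{S_k}{1-P}$. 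Unwinding the substitutions $r_k = 1+\sigma_k^{-1}$ and $u_k^{t+1}=\frac{\delta}{u_k^t}r_k$ yields exactly \eqref{KdV2}, which is manifestly $L$-periodic. The same computation read backwards verifies directly that \eqref{KdV2} solves \eqref{KdV}, using the relation $S_k = P_{k,0}(1-P+S_{k-1})$.

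The remaining, and genuinely delicate, point is the sense in which \eqref{KdV} is \emph{equivalent} to \eqref{KdV2}. The passage to $\sigma_k$ discards the solution $r_k\equiv 1$, that is $u_k^{t+1}=\frac{\delta}{u_k^t}$, which also solves \eqref{KdV}; thus the periodic recursion has two branches. I would eliminate the discarded one precisely by the hypothesis $U^{t+1}\in\mathcal{U}$: for that branch $U_k^{t+1}=\val(\delta)-U_k^t = 1-U_k^t$, so $\sum_k U_k^{t+1} = L - \sum_k U_k^t > L/2$, violating the defining inequality of $\mathcal{U}$. Hence among the two periodic solutions only \eqref{KdV2} lies in $\mathcal{U}$, which gives the equivalence. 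I expect this branch-selection step to be the main obstacle: the algebra of the first two paragraphs is routine once the products $P_{k,j}$ are recognized, whereas the content of the lemma is that the two conditions $U^t,U^{t+1}\in\mathcal{U}$ simultaneously guarantee $P\neq 1$ (solvability of the affine recursion) and exclude the spurious branch, thereby making the time evolution single-valued.
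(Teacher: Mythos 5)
Your proof is correct, and it reaches the formula by a route that differs in mechanism from the paper's. The paper iterates \eqref{KdV} into a periodic continued fraction for $u_k^{t+1}$ (using $u_{k-L}^{t+1}=u_k^{t+1}$), views the result as a quadratic equation in $u_k^{t+1}$, and simply asserts its two roots $\delta/u_k^t$ and \eqref{KdV2}; you instead factor out the trivial branch from the start by the substitution $u_k^{t+1}=(\delta/u_k^t)\,r_k$ and linearize the resulting M\"obius recursion through $\sigma_k=(r_k-1)^{-1}$, so that the periodic boundary condition becomes the affine equation $(1-P)\sigma_k=S_k$. The two derivations are two faces of the same structure --- your branch $r\equiv 1$ and the affine-recursion branch are exactly the two roots of the paper's quadratic --- and both discard the spurious branch by the identical valuation count $\sum_k U_k^{t+1}=L-\sum_k U_k^t$, which is incompatible with $U^t,U^{t+1}\in\mathcal{U}$. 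What your version buys: the solution formula is actually derived rather than asserted (the paper leaves the verification of its two roots to the reader), and the role of the hypothesis $U^t\in\mathcal{U}$ is localized precisely at the invertibility of $1-P$, via $\val(P)=L-2\sum_k U_k^t>0$, which in the paper is hidden inside the quadratic. What the paper's version buys is brevity and a form matching the classical continued-fraction presentation of discrete KdV. One small point you should make explicit: passing to $\sigma_k$ requires $r_k\neq 1$ for \emph{all} $k$, so you need the observation that if $r_{k_0}=1$ for a single $k_0$, then the recursion $r_k-1=P_{k,0}^{-1}(r_{k-1}-1)/r_{k-1}$ propagates $r\equiv 1$ around the whole period; this one-line remark is what justifies your claim that the periodic recursion has exactly the two branches you list.
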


\begin{proof}

Using (\ref{KdV}) recursively, we have
\begin{align*}
u_{k}^{t+1}&=u_{k-1}^t+\frac{\delta}{u_k^t}-\frac{\delta}{u_{k-1}^{t+1}}
=u_{k-1}^t+\frac{\delta}{u_k^t}-\frac{\delta}{u_{k-2}^t+\frac{\delta}{u_{k-1}^t}-\frac{\delta}{u_{k-2}^{t+1}}}\\
&=\cdots =u_{k-1}^t+\frac{\delta}{u_k^t}-\frac{\delta}
{u_{k-2}^t+\frac{\delta}{u_{k-1}^t}-\frac{\delta}{\ddots -\frac{\delta}{u_{k-L}^{t+1}}}}.
\end{align*}
We can regard this continued fraction as a quadratic equation in $u_k^{t+1}$
because of the periodic boundary condition $u_{k-L}^{t+1}=u_k^{t+1}$.
The two solutions of this quadratic equation are expressed as:
\[
u_{k}^{t+1}=\frac{\delta}{u_k^t},\quad\mbox{or}\quad u_k^{t+1}=\frac{\delta}{u_k^t}\!\left(\!
1\!+\!
\frac{1-P}
{P_{k,0}+P_{k,1}+\cdots+P_{k,L-1}}
\!\right).
\]
However, the first solution contradicts the condition 
$(U_{k}^{t})_{k\in \Z_L},(U_{k}^{t+1})_{k\in \Z_L}\in \mathcal{U}$.
In fact, if $u_{k}^{t+1}=\delta/u_k^t$, the following equation should be true:
\[
\sum_{k}{U_k^{t+1}}=\val\,(\prod_k{u_k^{t+1}})=\val\,(\prod_k{\frac{\delta}{u_k^{t}}})
=L-\sum_{k}{U_k^{t}},
\]
which implies $\sum_{k}{U_k^{t}}\geq L/2$ or $\sum_{k}{U_k^{t+1}}\geq L/2$.
\end{proof}

Let
$K_{>0}\subset K$ be the semifield defined by
\[
K_{>0}:=\{c_{-n/d}~t^{-n/d}+c_{(-n+1)/d}~t^{-(n+1)/d}+\cdots\,\vert\,c_{-n/d}>0,~n/d\in\Q_{>0}\}.
\]
For $a,b\in K_{>0}$, it follows that
$\val(a+b)=\min[\val(a),\val(b)]$,
which is not always true on $K$.

Assume $u_k^t\in K_{>0}$ for all $n,t$. By taking the valuation of (\ref{KdV2}), we have
the following proposition:
\begin{Prop}\label{prop:U-evol}
The tropicalization of the periodic discrete KdV equation 
is given by the piecewise-linear map 
\begin{align}\label{pBBS-eq}
U_k^{t+1} 
= 
\min\Bigl[1-U_k^t, 
\max_{m=0,1,\ldots,L-1}\bigl[ \sum_{j=1}^{m+1}{U_{k-j}^t}-\sum_{j=1}^m{(1-U_{k-j}^t})
\bigr]
\Bigr]
\end{align}
on $\mathcal{U}$. We refer to this system as the tropical KdV equation.
This evolution equation is closed 
on the phase space $\mathcal{U}_{\rm BBS}$.
\end{Prop}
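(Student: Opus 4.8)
The plan is to derive the piecewise-linear map \eqref{pBBS-eq} by applying $\val$ directly to the rational evolution \eqref{KdV2}, using the one feature that makes tropicalization exact here: for $a,b\in K_{>0}$ one has $\val(a+b)=\min[\val(a),\val(b)]$, with genuine equality. Under the standing hypothesis $u_k^t\in K_{>0}$, each $P_{k,m}$ lies in $K_{>0}$, so every sum and quotient in \eqref{KdV2} tropicalizes term-by-term. First I would record the basic valuations: $\val(\delta/u_k^t)=1-U_k^t$;
$$\val(P)=\sum_{l=1}^L\bigl(1-U_l^t-U_{l-1}^t\bigr)=L-2\sum_{k=1}^L U_k^t,$$
which is strictly positive exactly because $(U_k^t)\in\mathcal{U}$, whence $\val(1-P)=0$ (the leading term $1$ has smaller valuation than $P$, so there is no cancellation); and
$$\val(P_{k,m})=\sum_{l=0}^m\bigl(1-U_{k-l}^t-U_{k-l-1}^t\bigr).$$

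Next I would tropicalize the shape of \eqref{KdV2}. The denominator $\sum_{m=0}^{L-1}P_{k,m}$ is a sum of elements of $K_{>0}$, so its valuation is $\min_m\val(P_{k,m})$; hence the fraction has valuation $\val(1-P)-\min_m\val(P_{k,m})=\max_m\bigl(-\val(P_{k,m})\bigr)$, and since $1+(1-P)/\sum_m P_{k,m}\in K_{>0}$ the bracket contributes $\min\bigl[0,\max_m(-\val(P_{k,m}))\bigr]$. Adding the valuation of $\delta/u_k^t$ gives
$$U_k^{t+1}=\min\Bigl[\,1-U_k^t,\ \max_{0\le m\le L-1}\bigl((1-U_k^t)-\val(P_{k,m})\bigr)\Bigr].$$
It then remains to verify the bookkeeping identity $(1-U_k^t)-\val(P_{k,m})=\sum_{j=1}^{m+1}U_{k-j}^t-\sum_{j=1}^m(1-U_{k-j}^t)$, a routine reindexing in which the two $U_k^t$ contributions cancel and the telescoping sums leave $-m+\sum_{j=1}^m U_{k-j}^t+\sum_{j=1}^{m+1}U_{k-j}^t$ on both sides. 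This reproduces \eqref{pBBS-eq} exactly.

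For closedness on $\mathcal{U}_{\rm BBS}$ I would proceed in two steps. If all $U_k^t\in\{0,1\}$, then every entry of the outer $\min$ is an integer, so $U_k^{t+1}\in\Z$; moreover $U_k^{t+1}\le 1-U_k^t\le 1$, while the term $m=0$ inside the maximum equals $U_{k-1}^t\ge 0$, forcing $\max_m(\cdots)\ge 0$ and hence $U_k^{t+1}=\min[\,\ge 0,\ \ge 0\,]\ge 0$; thus $U_k^{t+1}\in\{0,1\}$. It remains to see that the ball number $\sum_k U_k$ is preserved, giving $\sum_k U_k^{t+1}=\sum_k U_k^t<L/2$. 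I would obtain this either from the carrier (transfer-matrix) reformulation of \eqref{pBBS-eq}, where summing the carrier increments around the periodic lattice telescopes to zero, or from the invariance of the constant coefficient $c_0=\prod_k u_k^t+\delta^L/\prod_k u_k^t$ of the spectral curve together with the constraint $\sum_k U_k<L/2$.

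The main obstacle is not the algebra but the bookkeeping of where the min-plus rule is exact: I must stay within $K_{>0}$ so that $\val(a+b)=\min$ holds at every step, and in particular confirm that no leading-order cancellation occurs in $1-P$ or in $1+(1-P)/\sum_m P_{k,m}$. The second delicate point is the sum-preservation needed for closedness: the invariance of $c_0$ pins down $\sum_k U_k^{t+1}$ only up to the reflection $S\mapsto L-S$, so excluding the reflected branch requires the extra input that both $U^t$ and $U^{t+1}$ satisfy the strict inequality $\sum_k U_k<L/2$, equivalently the direct carrier computation.
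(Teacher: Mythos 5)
Your proposal is correct, and for the first assertion it is essentially the paper's own proof: the paper also starts from \eqref{KdV2}, uses $u_k^t\in K_{>0}$ and $(U_k^t)_k\in\mathcal{U}$ to conclude $\val(P_{k,0}+\cdots+P_{k,L-1})=\min_m\val(P_{k,m})$ and $\val(1-P)=0$, and dismisses the remaining bookkeeping as ``direct calculations,'' which you have written out correctly in full (including the point that $\val(P)=L-2\sum_kU_k^t>0$ is what forbids cancellation in $1-P$). The genuine difference is in the closedness argument. To show $U_k^t\in\{0,1\}$ implies $U_k^{t+1}\in\{0,1\}$, the paper proves $N_k^t:=\max_m\bigl[\sum_{j=1}^{m+1}U_{k-j}^t-\sum_{j=1}^m(1-U_{k-j}^t)\bigr]\geq 0$ by identifying $N_k^t$ with the number of arcs of the pBBS rule straddling the boundary between the $(k-1)$-th and $k$-th boxes, whereas you bound the maximum below by its $m=0$ term $U_{k-1}^t\geq 0$; your argument is shorter, purely algebraic, and correct. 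What the paper's combinatorial identification buys, however, is everything else at once: it is exactly the statement that makes the subsequent Corollary (equivalence of \eqref{pBBS-eq} with the pBBS on $\{0,1\}$ states) immediate, and conservation of the ball count --- hence preservation of the constraint $\sum_kU_k<L/2$ in the definition of $\mathcal{U}_{\rm BBS}$ --- then comes for free, since the pBBS merely moves balls. In your write-up this is the one step left unfinished: as you yourself observe, invariance of $\val(c_0)$ pins down $\sum_kU_k$ only up to the reflection $S\mapsto L-S$, so that route cannot stand alone, and the carrier/telescoping computation you would rely on instead is invoked but not carried out. To close the argument on $\mathcal{U}_{\rm BBS}$ you should either perform that carrier computation explicitly or, as the paper does, prove the arc-count interpretation of $N_k^t$, which settles the sign of $N_k^t$ and the conservation of $\sum_kU_k$ in one stroke.
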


\begin{proof}
If $u_k^t\in K_{>0}$ and $(U_k^t)_k\in \mathcal{U}$, we have
\[
\val(P_{k,0}+P_{k,1}+\cdots+P_{k,L-1})
=\min[\val(P_{k,0}),\val(P_{k,1}),\cdots,\val(P_{k,L-1})]
\]
and $\val(1-P)=0$.
Then, direct calculations conclude (\ref{pBBS-eq}) immediately.
Next, we prove that $U_k^t\in\{0,1\}$ implies $U_k^{t+1}\in\{0,1\}$.
For this, it is enough to prove 
\[
N_k^t := 
\max_{m=0,1,\ldots,L-1}\bigl[ \sum_{j=1}^{m+1}{U_{k-j}^t}-\sum_{j=1}^m{(1-U_{k-j}^t})
\bigr]\geq 0.
\]
It is easily checked that 
$N_k^t$
equals to the number of arcs (\S 4.1) which straddle the boundary between 
the $(k-1)$-th box and the $k$-th box at time $t$.
Especially, this number is non-negative.
\end{proof}

\begin{Cor}
Let
\[
U_k^t:=\{\mbox{the number of ball in the $n$-th box at time $t$}\}\in\{0,1\}.
\]
Then, the evolution equation $(\ref{pBBS-eq})$ is equivalent to the pBBS.
\end{Cor}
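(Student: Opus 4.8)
The plan is to show that on the phase space $\mathcal{U}_{\rm BBS}$ the analytic update rule \eqref{pBBS-eq} reproduces the combinatorial ball-moving rule (1)--(4) of \S 4.1. I write $D_k$ for the indicator that box $k$ \emph{receives} a ball in the move at time $t$, i.e. that box $k$ is the empty-box (right) endpoint of one of the arcs drawn in steps (1)--(4). By the very definition of the dynamics the combinatorial successor is $U_k^{t+1}=D_k$, and since a ball can only be moved into an empty box one has $D_k=1\Rightarrow U_k^t=0$. Hence it suffices to prove $D_k=\min[1-U_k^t,\,N_k^t]$, where $N_k^t$ denotes the quantity in \eqref{pBBS-eq}, already identified in the proof of Proposition \ref{prop:U-evol} as the number of arcs straddling the boundary between box $k-1$ and box $k$.

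First I would record the combinatorial counting identity
\[
N_{k+1}^t-N_k^t=U_k^t-D_k .
\]
This is immediate from the arc interpretation: as the boundary sweeps across box $k$, an occupied box $k$ is the left endpoint of exactly one arc (which begins to straddle, contributing $+U_k^t$) and a destination box $k$ is the right endpoint of exactly one arc (which ceases to straddle, contributing $-D_k$), while no arc both begins and ends at $k$.

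Second, and this is the analytic core, I would derive directly from the min-plus formula the recursion
\[
N_{k+1}^t=U_k^t+\max[\,N_k^t+U_k^t-1,\,0\,].
\]
Putting $\phi_k(m)=\sum_{j=1}^{m+1}U_{k-j}^t-\sum_{j=1}^{m}(1-U_{k-j}^t)$, a one-step index shift gives $\phi_{k+1}(m+1)=\phi_k(m)+(2U_k^t-1)$ and $\phi_{k+1}(0)=U_k^t$, whence $N_{k+1}^t=\max[\,U_k^t,\ (2U_k^t-1)+\max_{0\le m\le L-2}\phi_k(m)\,]$. The crux is that the truncated maximum equals $N_k^t$, i.e. the dropped term $m=L-1$ never maximizes: by periodicity $\phi_k(L-1)=2M-L+1-U_k^t$ with $M=\sum_{\ell=1}^{L}U_\ell^t$, which is $\le 0$ since $M<L/2$, while $\phi_k(0)=U_{k-1}^t\ge 0$. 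Combining the two recursions and using $a-\max[a+b-1,0]=\min[1-b,a]$ then gives $D_k=\min[1-U_k^t,\,N_k^t]$, which is exactly \eqref{pBBS-eq}.

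I expect the main obstacle to be the second recursion, specifically the cyclic bookkeeping: on a ring the finite window $m=0,\dots,L-1$ must be shown to compute the correct nonnegative ``arc height'' at every boundary independently of a choice of starting point, and it is precisely the hypothesis $M<L/2$ (more empty boxes than balls, forcing a boundary crossed by no arc) that discards the wrap-around term $m=L-1$ and lets the local recursion close. This reflects the stack description of the box-ball matching, in which $N_k^t$ is the number of balls still awaiting a partner upon reaching box $k$ and an empty box is filled exactly when this stack is nonempty; formalizing that picture on the circle is the same difficulty, handled above by the estimate on $\phi_k(L-1)$.
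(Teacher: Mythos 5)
Your proof is correct, and it takes a genuinely different route from the paper's. The paper's own proof is a one-line combinatorial verification: relying on the identification of $N_k^t$ with the number of arcs straddling the boundary between boxes $k-1$ and $k$ (made in the proof of Proposition \ref{prop:U-evol}), it asserts directly from the ball-moving rule that $U_k^{t+1}=1$ iff $U_k^t=0$ and $N_k^t>0$ (and $U_k^{t+1}=0$ iff $U_k^t=1$ or $N_k^t=0$), which is precisely $U_k^{t+1}=\min[1-U_k^t,N_k^t]$. That assertion implicitly uses the nested, non-crossing structure of the arcs: one must know that an empty box over which some arc passes is itself the right endpoint of an arc. You bypass the nesting argument entirely; your only combinatorial input is the endpoint count $N_{k+1}^t-N_k^t=U_k^t-D_k$ (each occupied box is the left endpoint of exactly one arc, each receiving box the right endpoint of exactly one), and the load-bearing step is instead the purely analytic recursion $N_{k+1}^t=U_k^t+\max[N_k^t+U_k^t-1,0]$, whose verification is exactly where the hypothesis $\sum_k U_k^t<L/2$ enters, via the estimate $\phi_k(L-1)=2M-L+1-U_k^t\le 0\le\phi_k(0)$ that discards the wrap-around term; the elementary identity $a-\max[a+b-1,0]=\min[1-b,a]$ then closes the computation. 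What your route buys is an explicit, checkable treatment of the cyclic bookkeeping that the paper compresses into ``by the definition of pBBS'', at the cost of being longer; what the paper's route buys is brevity, since once the arc picture is accepted the equivalence is immediate. Note that both proofs ultimately rest on the same unproved-in-detail ingredient, namely the identification of $N_k^t$ as the straddling-arc count, which the proof of Proposition \ref{prop:U-evol} states with ``it is easily checked''.
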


\begin{proof}
By the definition of pBBS introduced in \S 4.1, we find 
\[
\begin{array}{l}
U_k^{t+1}=1 \iff U_k^t=0 \mbox{ and } N_k^t>0,\\[1mm]
U_k^{t+1}=0 \iff U_k^t=1 \mbox{ or } N_k^t=0,
\end{array}
\]
which is equivalent to $U_{k+1}^{t+1}=\min[1-U_k^t,N_k^t]$.
\end{proof}

We obtain the following as a corollary of Lemma \ref{lem:KdV-curve}.

\begin{Cor}
Let $M$ be 
the maximum integer satisfying $M < \frac{L}{2}$.
The tropical spectral curve $\Gamma$ of 
the tropical KdV equation is given by 
\begin{align}\label{U-spectral}
F(X,Y) =
\min\bigl[2Y, Y + \min_{j=0,\ldots,M}[jX+C_j], XL, L\bigr].
\end{align}
Here $C_j ~(j=1,\cdots,M)$ are tropical functions on 
$\mathcal{U}$.
\end{Cor}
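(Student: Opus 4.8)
The plan is to obtain the tropical spectral curve $\Gamma$ by directly tropicalizing the polynomial $f(x,y)$ given in Lemma~\ref{lem:KdV-curve}. First I would recall from that lemma that the characteristic polynomial has the explicit form
\[
f(x,y)=y^2+y(c_Mx^M+\cdots+c_1x+c_0)+(x+\delta)^L,
\]
where each coefficient $c_i$ is a rational function lying in $\R_{>0}(u_n^t,\delta)$ or $\R_{<0}(u_n^t,\delta)$. Since the tropical spectral curve is by definition the tropicalization of the spectral curve, the task reduces to applying the valuation $\val$ term by term, using $\val(u_k^t)=U_k^t$ and $\val(\delta)=1$.

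The three groups of terms tropicalize as follows. The leading term $y^2$ contributes $2Y$. For the middle term $y(c_Mx^M+\cdots+c_0)$, each monomial $yc_jx^j$ tropicalizes to $Y+jX+\val(c_j)$, so setting $C_j=\val(c_j)$ yields the contribution $Y+\min_{j=0,\ldots,M}[jX+C_j]$; this is where the $C_j$ of the statement are defined as tropical functions on $\mathcal{U}$. For the constant term $(x+\delta)^L=\sum_{i=0}^L\binom{L}{i}x^i\delta^{L-i}$, the binomial coefficients are integers with valuation $0$, so the monomial $\binom{L}{i}x^i\delta^{L-i}$ tropicalizes to $iX+(L-i)$, and minimizing over $0\le i\le L$ gives $\min_{0\le i\le L}[iX+L-i]=\min[XL,L]$ (the minimum is attained at the endpoints $i=L$ and $i=0$, since the expression is linear in $i$). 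Collecting all three contributions gives exactly
\[
F(X,Y)=\min\bigl[2Y,\ Y+\min_{j=0,\ldots,M}[jX+C_j],\ XL,\ L\bigr],
\]
as claimed in \eqref{U-spectral}.

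The one point requiring care is the legitimacy of the term-by-term tropicalization, since in general $\val(a+b)\ge\min[\val(a),\val(b)]$ with possible strict inequality when leading terms cancel. The reason cancellation does not occur here is precisely the sign condition built into Lemma~\ref{lem:KdV-curve}: each $c_i$ lies entirely in $\R_{>0}(u_n^t,\delta)$ or $\R_{<0}(u_n^t,\delta)$, so within any single coefficient the terms share a sign and cannot cancel, giving a well-defined $C_j=\val(c_j)$; and for $(x+\delta)^L$ all binomial coefficients are positive. Thus the expansion meets the conditions under which tropicalization commutes with taking the defining polynomial, exactly as exploited in the proof of Proposition~\ref{prop:U-evol} via the semifield $K_{>0}$. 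I expect this sign-positivity verification to be the main (and essentially only) obstacle; once it is in place, the computation of $F(X,Y)$ is a routine valuation of each monomial, and the result follows immediately as a corollary of Lemma~\ref{lem:KdV-curve}.
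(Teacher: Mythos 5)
Your proof is correct and is essentially the paper's own (implicit) argument: the paper states this result as an immediate corollary of Lemma~4.6, meaning exactly the coefficient-wise tropicalization you carry out, with $C_j=\val(c_j)$ and $(x+\delta)^L$ contributing $\min[LX,L]$. Your added justification that the sign condition $c_i\in\R_{>0}(u_n^t,\delta)$ or $\R_{<0}(u_n^t,\delta)$ (together with $u_k^t\in K_{>0}$) rules out cancellation, so that $\val$ commutes with the sums, is precisely the point the paper relies on but leaves unstated.
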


\subsection{Tropical spectral curve of pBBS}
\label{subsec:4-3}

The main result of this subsection is the following.

\begin{Prop}\label{prop:pBBS-curve}
Fix the conserved quantity of the pBBS as
$\lambda=(\lambda_1 < \cdots < \lambda_g)$.
Then the corresponding spectral curve $\Gamma$ of 
the piecewise-linear map \eqref{pBBS-eq} on $\mathcal{U}_{{\rm BBS},\lambda}$ 
is given by 
\begin{align}\label{pBBS-trop}
F(X,Y) = \min[2Y, Y + \min[gX, (g-1)X+C_{g-1},\ldots, X+C_1,C_0], XL, L].
\end{align}
where $C_j = \sum_{i=1}^{g-j} \lambda_i$ for $j=0,\ldots,g-1$.
\end{Prop}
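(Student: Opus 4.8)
The plan is to combine the structural description of the tropical spectral curve from the previous corollary (equation \eqref{U-spectral}) with the known linearization theory of the pBBS (Theorem \ref{th:pBBS}), reducing the computation of the unknown coefficients $C_j$ to an evaluation on a single conveniently chosen state. Starting from \eqref{U-spectral}, the curve already has the shape $\min[2Y,\ Y+\min_{j=0,\ldots,M}[jX+C_j],\ XL,\ L]$ with $C_j$ tropical functions on $\mathcal{U}$; the task is to show that on the isolevel set $\mathcal{U}_{\mathrm{BBS},\lambda}$ these reduce, after discarding inactive terms, to $C_j=\sum_{i=1}^{g-j}\lambda_i$ and that the effective range of $j$ is $0,\ldots,g$. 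Since the $C_j$ are conserved by the evolution (being built from the time-invariant characteristic polynomial $f$), and since Lemma \ref{lem:U}(ii) guarantees that every $U\in\mathcal{U}_{\mathrm{BBS},\lambda}$ can be moved by a sequence of commuting evolutions $T_m$ to the distinguished scattering-free state $U_0$ of Lemma \ref{lem:U}(i), it suffices to compute the $C_j$ for $U_0$ alone.

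First I would make precise the relationship between the curve \eqref{U-spectral} and the conserved quantities. The invariance of $f(x,y)$ under the time evolution (established around \eqref{eq4.5}) shows the tropicalized coefficients $C_j$ are invariant under \eqref{pBBS-eq}; more strongly, they are invariant under each $T_m$ in the family of Remark \ref{rem:Tvm}, since all these commuting flows share the same spectral data. This is the step that licenses passing from an arbitrary $U$ to $U_0$. Second, I would analyze the scattering-free state $U_0$, which consists of $g$ solitons of lengths $\lambda_1<\cdots<\lambda_g$ placed far apart on the circle of length $L$. For such a configuration the continued-fraction / determinant computation underlying Lemma \ref{lem:KdV-curve} factorizes soliton-by-soliton, and the valuations of the coefficients $c_i$ of the middle polynomial $y(c_Mx^M+\cdots+c_0)$ can be read off directly: each soliton of length $\lambda_i$ contributes a factor whose tropicalization adds $\lambda_i$ to the relevant coefficients. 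Carrying out this bookkeeping, the Newton-polygon data of the tropical polynomial collapses so that only the exponents $j=0,1,\ldots,g$ carry active (minimal) terms, with $C_j=\lambda_1+\cdots+\lambda_{g-j}$, matching the claimed formula and simultaneously explaining why the effective upper index is $g$ rather than $M$.

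The main obstacle I anticipate is the explicit valuation computation of the coefficients $c_i$ for the scattering-free state $U_0$: one must show that the dominant (lowest-valuation) contribution to each $c_i$ comes from exactly the product of the $g-i$ smallest soliton lengths, and that no cancellation spoils the $K_{>0}$ positivity that Proposition \ref{prop:U-evol} relied on to turn $\val$ of a sum into a $\min$. Controlling these leading terms — rather than the full rational functions — is the crux; I would organize it by choosing $U_0$ with the solitons maximally separated so that the matrices $S^t,R^t$ become effectively block-diagonal up to the $y$-entries, making the characteristic polynomial a near-product of single-soliton factors. Once the leading valuations are pinned down, the reduction to \eqref{pBBS-trop} and the determination of $C_j$ follow by matching the Newton polygon against the general form \eqref{U-spectral}, and the conservation argument of the first step extends the result from $U_0$ to all of $\mathcal{U}_{\mathrm{BBS},\lambda}$.
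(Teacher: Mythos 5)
Your overall strategy shares its first step with the paper: the reduction, via conservation of the $C_j$ under the commuting flows $T_m$ (Remark \ref{rem:Tvm}) and Lemma \ref{lem:U}, to the single scattering-free state $U_0$ is exactly how the paper begins the proof of its key Lemma \ref{lemma:C}(i). But from that point on there are two genuine gaps. First, the central computation --- pinning down the valuations of the coefficients $c_j$ for $U_0$ --- is precisely what you defer as ``the crux,'' and it is not a routine bookkeeping exercise: the matrices $S^t$, $R^t$ are not block-diagonal (the off-diagonal $1$'s and the corner $y$-entries couple all sites), so the characteristic polynomial is a sum over cyclic path configurations, and controlling which configuration attains the minimal valuation, with no cancellation, is exactly the content of the path/tiling description of \cite[Th.~II.1]{MIT05}. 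The paper's proof does not avoid this: it imports that description (the tropicalized formula \eqref{eq:Cj} expressing $C_j$ as a minimum of $\sum_k \xi(U_k;F_I)$ over tilings $F_I$) and then, in Lemma \ref{lemma:C}, exhibits explicit minimizing tilings $F_j$ and an induction showing they are optimal. Without this apparatus, or an equivalent substitute, your plan is a plan, not a proof.

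Second, and more concretely, your proposed endgame is wrong: you expect the Newton-polygon data to ``collapse so that only the exponents $j=0,1,\ldots,g$ carry active (minimal) terms,'' explaining why the top index is $g$ rather than $M$. In fact the coefficients $c_j$ for $g<j\leq M$ do \emph{not} become negligible; their valuations are $C_j=0$ (Lemma \ref{lemma:C}(ii), proved by a separate recursive tiling construction), so the terms $jX+0$ for $j>g$ genuinely appear in the tropical polynomial \eqref{U-spectral}. They disappear only at the level of the tropical \emph{curve}: the paper's final paragraph observes that the three infinite domains where $F=XL$, $L$, $2Y$ fill $D=\{X<0 \text{ or } Y<0\}$, and on $\R^2\setminus D$ one has $\min_{j=g,\ldots,M}[jX]=gX$, so the $j>g$ terms never create non-differentiability. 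Your argument as stated would either halt when the computation reveals $\val(c_j)=0$ for $j>g$, or silently assert a false collapse; either way the step discarding $j=g+1,\ldots,M$ needs the valuation claim of Lemma \ref{lemma:C}(ii) \emph{and} the region argument, neither of which appears in your proposal.
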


\begin{figure}
\unitlength=1.0mm
\begin{picture}(140,88)(-25,0)

\put(20,10){\vector(1,0){70}}
\put(0,10){\line(1,0){20}}
\put(91,9){\scriptsize $X$}
\put(20,0){\vector(0,1){85}}
\put(16,82){\scriptsize $Y$}

\thicklines

\put(20,6){\scriptsize $0$}
\put(30,10){\line(0,-1){1}}
\put(29,6){\scriptsize $1$}
\put(40,10){\line(0,-1){1}}
\put(39,6){\scriptsize $\lambda_1$}
\put(50,10){\line(0,-1){1}}
\put(49,6){\scriptsize $\lambda_{2}$}
\put(63,10){\line(0,-1){1}}
\put(62,6){\scriptsize $\lambda_{g-1}$}
\put(73,10){\line(0,-1){1}}
\put(72,6){\scriptsize $\lambda_{g}$}

\put(20,10){\line(-1,-3){3}}
\put(20,10){\line(1,1){20}}
\put(40,30){\line(2,1){10}}
\put(50,35){\line(3,1){4}}
\put(73,39){\line(-5,-1){12}}
\put(73,39){\line(1,0){15}}

\put(29,64){\line(0,1){20}}
\put(29,64){\line(1,-1){11}}
\put(40,53){\line(2,-1){10}}
\put(50,48){\line(3,-1){4}}
\put(73,44){\line(-5,1){12}}
\put(73,44){\line(1,0){15}}

\put(20,10){\line(1,6){9}}
\put(40,30){\line(0,1){23}}
\put(50,35){\line(0,1){13}}
\put(63,37){\line(0,1){9}}
\put(73,39){\line(0,1){5}}
\put(54,41){$\cdots$}

\thinlines 

\put(33,42){\oval(6,27)}
\put(35,41){\scriptsize$\wedge$}
\put(31,36){\scriptsize $B_1$}
\put(45,42){\oval(6,10)}
\put(47,41){\scriptsize$\wedge$}
\put(43,40){\scriptsize $B_2$}
\put(68,41.5){\oval(5,5)}
\put(69.5,40.5){\scriptsize$\wedge$}
\put(67,47){\scriptsize $B_g$}

\end{picture}
\caption{Spectral curve for pBBS}
\end{figure}
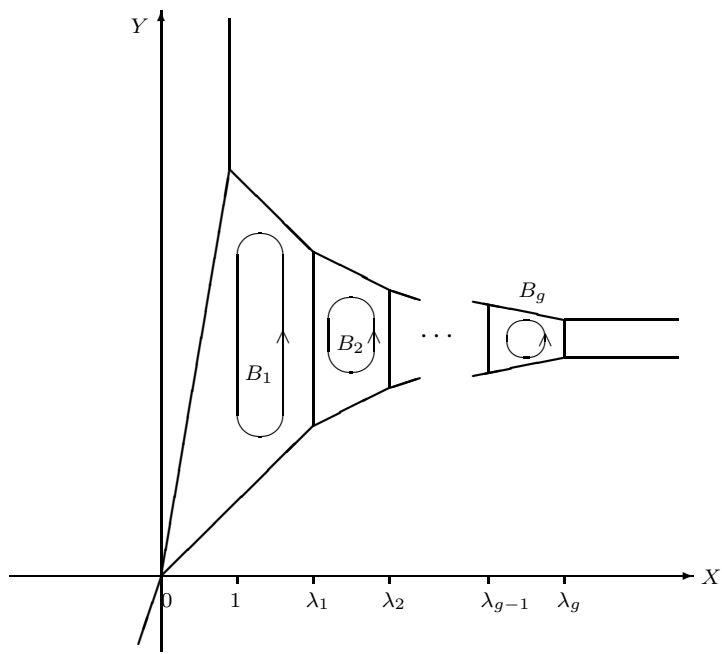

See Figure 4 for the tropical curve $\Gamma$.

To prove Proposition \ref{prop:pBBS-curve},
we have to describe the detail of $C_j$
by applying \cite{MIT05}.
Define $\mathcal{L} = \{1,\ldots,L\}$ and 
$\mathcal{I}_j = \{ I \subset \mathcal{L} ~|~ |I|=j, 
~\text{no consecutive numbers in } I\}$.
Consider a $2 \times L$ lattice and fix $I \in \mathcal{I}_j$.
On the lattice we set $\circ$ (resp. $\bullet$) at  
the top and bottom of the $k$-th column for $k \in I$ 
(resp. $k \in \mathcal{L} \setminus I$).
Here is the example of $L=9$ and $I=\{2,4,7\} \in \mathcal{I}_3$:

\begin{center}
\unitlength=1.0mm
\begin{picture}(100,35)(0,25)

\put(4,53){\small{$1$}}
\put(14,53){\small{$2$}}
\put(24,53){\small{$3$}}
\put(34,53){\small{$4$}}
\put(48,53){\small{$\cdots$}}
\put(61,53){\small{$L-2$}}
\put(71,53){\small{$L-1$}}
\put(84,53){\small{$L$}}

\put(0,50){\line(1,0){90}}  
\put(0,40){\line(1,0){90}}  
\put(0,30){\line(1,0){90}}  
\multiput(0,50)(10,0){10}{\line(0,-1){20}}

\multiput(5,50)(20,0){3}{\circle*{1.5}}
\multiput(5,30)(20,0){3}{\circle*{1.5}}
\multiput(75,30)(0,20){2}{\circle*{1.5}}
\multiput(55,30)(0,20){2}{\circle*{1.5}}
\multiput(85,30)(0,20){2}{\circle*{1.5}}
\multiput(15,50)(10,0){8}{\circle{1.5}}
\multiput(15,30)(10,0){8}{\circle{1.5}}

\end{picture}
\end{center}
We identify the two vertical boundaries, and 
tile this lattice with the following $4$ patterns:

\begin{center}
\unitlength=1.0mm
\begin{picture}(100,30)(0,10)
\multiput(0,35)(20,0){4}{\line(0,-1){20}}
\multiput(10,35)(20,0){4}{\line(0,-1){20}}
\multiput(0,35)(20,0){4}{\line(1,0){10}}
\multiput(0,25)(20,0){4}{\line(1,0){10}}
\multiput(0,15)(20,0){4}{\line(1,0){10}}

\multiput(5,35)(20,0){3}{\circle*{1.5}}
\multiput(5,15)(20,0){3}{\circle*{1.5}}
\multiput(65,35)(0,-20){2}{\circle{1.5}}

\thicklines

\multiput(5,35)(20,0){2}{\line(1,-1){5}}
\multiput(20,20)(20,0){2}{\line(1,-1){5}}
\put(0,30){\line(1,-1){5}}\put(5,25){\line(0,-1){10}}
\put(45,35){\line(0,-1){10}}\put(45,25){\line(1,-1){5}}
\put(60,30){\line(1,-1){10}}

\put(-6,32){\small{(a)}}
\put(14,32){\small{(b)}}
\put(34,32){\small{(c)}}
\put(54,32){\small{(d)}}
\end{picture}
\end{center}
in such a way that (a)-(c) are located at $k \in \mathcal{L} \setminus I$
and (d) is at $k \in I$ to have just $L-j$ non-intersecting paths
each of which starts from upper $\bullet$ and ends at lower $\bullet$.
Note that $\mathcal{I}_j = \emptyset$ for $j > M$, 
and that no tilling exists when $I$ includes consecutive numbers.
The following is one of the possible tillings of the above example:

\begin{center}
\unitlength=1.0mm
\begin{picture}(100,35)(0,25)

\put(4,53){\small{$1$}}
\put(14,53){\small{$2$}}
\put(24,53){\small{$3$}}
\put(34,53){\small{$4$}}
\put(48,53){\small{$\cdots$}}
\put(61,53){\small{$L-2$}}
\put(71,53){\small{$L-1$}}
\put(84,53){\small{$L$}}

\put(0,50){\line(1,0){90}}  
\put(0,40){\line(1,0){90}}  
\put(0,30){\line(1,0){90}}  
\multiput(0,50)(10,0){10}{\line(0,-1){20}}

\multiput(5,50)(20,0){3}{\circle*{1.5}}
\multiput(5,30)(20,0){3}{\circle*{1.5}}
\multiput(75,30)(0,20){2}{\circle*{1.5}}
\multiput(55,30)(0,20){2}{\circle*{1.5}}
\multiput(85,30)(0,20){2}{\circle*{1.5}}
\multiput(15,50)(10,0){8}{\circle{1.5}}
\multiput(15,30)(10,0){8}{\circle{1.5}}


\thicklines

\multiput(5,50)(20,0){2}{\line(1,-1){20}}
\put(55,50){\line(1,-1){20}}
\put(45,50){\line(0,-1){10}}\put(45,40){\line(1,-1){10}}
\put(75,50){\line(1,-1){10}}\put(85,40){\line(0,-1){10}}
\put(85,50){\line(1,-1){5}}
\put(0,45){\line(1,-1){5}}\put(5,40){\line(0,-1){10}}
\end{picture}
\end{center}
We write $F_I$ for such a tilling, and call $F_I$ a {\it possible tilling}
for $I$. 
Define functions on $\mathcal{U}$ by 
\begin{align}\label{eq:xi}
\xi(U_k;F_I) = 
\begin{cases} 
1- U_k & \text{ if $F_I$ has (c) at $k$-th column}
\\ 
U_k & \text{ if $F_I$ has (a) at $k$-th column}
\\
0 &  \text{ otherwise,}
\end{cases}
\qquad k \in \mathcal{L}.
\end{align}
For instance, the above tilling $F_I$ gives
$\xi(U_1;F_I) = U_1$, $\xi(U_5;F_I) = 1-U_5$, $\xi(U_9;F_I) = U_9$
and $\xi(U_k;F_I)=0$ for other $k$.      
We remark $\xi(U_k;F_I) \in \{0,1\}$ 
on $\mathcal{U}_{\text{BBS}} \subset \mathcal{U}$.

As a tropicalization of \cite[Th. II.1]{MIT05}, we obtain the following:
\begin{Lemma}
The conserved quantities $C_j$ are written as 
\begin{align}\label{eq:Cj}
C_j 
= 
\min_{I \in \mathcal{I}_j} \min_{F_I} 
\left[\sum_{k \in \mathcal{L}} \xi(U_k;F_I)\right]
\qquad  j=0,\ldots,M.
\end{align}
\end{Lemma}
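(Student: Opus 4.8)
The plan is to obtain \eqref{eq:Cj} by tropicalizing the explicit formula for the coefficients $c_j$ furnished by \cite[Th. II.1]{MIT05}. That theorem expands each $c_j$ (the $x^j$-coefficient of the $y$-linear part of $f(x,y)=\det(X^t(y)+x\cdot\mathrm{id.})$) as an explicit sum indexed precisely by the combinatorial data introduced above: one sums over all $I\in\mathcal{I}_j$ and, for each such $I$, over all possible tillings $F_I$ of the $2\times L$ cylinder by the patterns (a)--(d), the point of the tilling picture being that the $L-j$ non-intersecting lattice paths record exactly the terms surviving in the determinantal expansion. Each summand is a monomial in the $u_k^t$ and $\delta$ (up to an overall sign) obtained by reading off one local factor per column. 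Thus the first step is simply to record this expansion, with attention to the monomial attached to each $F_I$.

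Next I would compute the valuation of a single summand. By construction the local factor at column $k$ is $u_k^t$ when $F_I$ places pattern (a) there, is $\delta/u_k^t$ when it places pattern (c), and is a unit (valuation $0$) for patterns (b) and (d). Using $\val(u_k^t)=U_k^t$, $\val(\delta)=1$, and additivity of $\val$ over products, the valuation of the summand attached to $F_I$ is exactly $\sum_{k\in\mathcal{L}}\xi(U_k;F_I)$ with $\xi$ as in \eqref{eq:xi}. This is pure bookkeeping: one matches the three cases of \eqref{eq:xi} against the three kinds of local factor.

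The final step is to pass from the \emph{sum} over $(I,F_I)$ to the \emph{minimum}. Here I invoke Lemma~\ref{lem:KdV-curve}, which asserts that $c_j$ lies in $\R_{>0}(u_n^t,\delta)$ or in $\R_{<0}(u_n^t,\delta)$; hence, up to a global sign, $c_j$ is a sum of elements of $K_{>0}$. Since $\val(a+b)=\min[\val(a),\val(b)]$ for $a,b\in K_{>0}$ and $\val$ ignores the global sign, no cancellation occurs, $c_j\neq 0$, and
\[
C_j=\val(c_j)=\min_{I\in\mathcal{I}_j}\ \min_{F_I}\ \val(\text{summand})=\min_{I\in\mathcal{I}_j}\ \min_{F_I}\ \sum_{k\in\mathcal{L}}\xi(U_k;F_I),
\]
which is \eqref{eq:Cj}.

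The main obstacle is the first step, namely establishing the precise dictionary between tillings and monomials: one must verify that \cite[Th. II.1]{MIT05} really does index the $x^j$-coefficient by the sets $I\in\mathcal{I}_j$ together with the non-intersecting-path tillings $F_I$, and that the local factor contributed by each pattern is exactly $u_k^t$, $\delta/u_k^t$, or a unit as claimed. Once this combinatorial description of $c_j$ over $K$ is secured, the tropicalization in the remaining steps is routine, the only genuine analytic input being the semifield property of $K_{>0}$ that converts the sum into a minimum.
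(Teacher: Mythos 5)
Your proposal is correct and matches the paper's approach: the paper proves this lemma only by the one-line assertion that it is ``a tropicalization of \cite[Th. II.1]{MIT05}'', and your three steps (the tilling-indexed monomial expansion of $c_j$ from that theorem, the column-by-column valuation bookkeeping giving $\sum_{k\in\mathcal{L}}\xi(U_k;F_I)$, and the no-cancellation argument via Lemma \ref{lem:KdV-curve} and the semifield property of $K_{>0}$) are exactly the details that assertion suppresses. Nothing in your argument deviates from the intended route.
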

Here is a key lemma for Proposition \ref{prop:pBBS-curve}:
\begin{Lemma}\label{lemma:C}
Fix $\lambda=(\lambda_1<\lambda_2<\cdots<\lambda_g)$.
On $\mathcal{U}_{{\rm BBS},\lambda} \subset \mathcal{U}_{\text{BBS}}$, 
$C_j$ \eqref{eq:Cj} are written as follows:
\\
(i) $C_j = \sum_{i=1}^{g-j} \lambda_j$ for $i=0,\ldots,g-1$,
and $C_g = 0$.
\qquad 
(ii) $C_j = 0$ for $g < j \leq M$.
\end{Lemma}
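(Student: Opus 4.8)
The plan is to combine the invariance of the $C_j$ with the reduction furnished by Lemma \ref{lem:U}, so that the entire computation takes place at a single, maximally transparent state. The functions $C_j$ are the coefficients of the tropical spectral curve \eqref{U-spectral}, that is, the valuations of the coefficients $c_i$ of the characteristic polynomial of Lemma \ref{lem:KdV-curve}, which is invariant under the time evolution; hence each $C_j$ is a conserved quantity, constant along every commuting flow $T_m$ of Remark \ref{rem:Tvm} (these share the same spectral curve as the integrable hierarchy). By Lemma \ref{lem:U}(ii) every $U\in\mathcal{U}_{{\rm BBS},\lambda}$ is carried by a composition of the $T_m$ to the distinguished non-scattering state $U_0$ of Lemma \ref{lem:U}(i). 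Therefore $C_j(U)=C_j(U_0)$ for all $U$, and it suffices to evaluate \eqref{eq:Cj} on $U_0$, whose balls form $g$ pairwise separated blocks of lengths $\lambda_1<\cdots<\lambda_g$ (separation is possible because $m:=\sum_i\lambda_i<L/2$, so vacancies are abundant).

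Next I would read off what \eqref{eq:Cj} computes for $U_0$. The four tiles have a transfer-matrix interpretation: between consecutive columns the path occupies one of two channels (call them $U$ and $L$); tile (a) keeps channel $U$ and costs $U_k$, tile (c) keeps channel $L$ and costs $1-U_k$, tile (b) switches $L\to U$ at cost $0$, and the gap tile (d), admissible only at the $\circ$-columns $k\in I$, switches $U\to L$ at cost $0$. Closing up the cycle forces the numbers of $U\to L$ and $L\to U$ switches to coincide, so an admissible $F_I$ with $|I|=j$ is exactly a partition of the cyclic strip into $j$ alternating $U$-arcs and $L$-arcs, and $\sum_k\xi(U_k;F_I)$ equals the number of balls carried by (a)-tiles plus the number of vacancies carried by (c)-tiles. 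As a consistency check, $j=0$ admits only the constant channel, giving $\min(m,L-m)=m$, the asserted value $C_0=\sum_{i=1}^{g}\lambda_i$.

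The remaining task is to minimise, over all such arc partitions, the quantity ``balls under $U$'' plus ``vacancies under $L$''. For the upper bound I would place the $j$ $L$-arcs so that each covers exactly one of the $j$ longest blocks: those blocks are then traversed entirely by (b)/(c)-tiles at zero cost, no vacancy lies under $L$, and the surviving $g-j$ shortest blocks lie under $U$ and contribute $\lambda_1+\cdots+\lambda_{g-j}$, realising $C_j\le\sum_{i=1}^{g-j}\lambda_i$. The matching lower bound is the heart of the matter and the step I expect to be the main obstacle: since the blocks are isolated by vacancies, covering a contiguous run of balls ``for free'' requires one $U\to L$ switch per block, so at most $j$ blocks can be saved, and stretching an $L$-arc to straddle two blocks pays for the intervening vacancies exactly enough to cancel any apparent gain. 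Making this exchange argument precise—tracking where the terminal (d)- and (b)-tiles fall and invoking the no-consecutive constraint on $I$—is what pins the minimum to the sum of the $g-j$ smallest lengths.

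Finally, part (ii) (together with the case $j=g$ of (i), where the empty sum gives $C_g=0$) follows from the same construction: once $g<j\le M$ there are enough gap tiles to cover all $g$ blocks and to absorb the remaining $j-g$ switches harmlessly inside the vacancy region, so the minimum is $0$.
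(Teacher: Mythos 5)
Your proposal follows the same architecture as the paper's proof: reduce to a non-scattering state $U_0$ via invariance of the $C_j$ under the commuting flows $T_m$ (Remark \ref{rem:Tvm} and Lemma \ref{lem:U}), then exhibit explicit optimal tilings. Your channel (transfer-matrix) reading of the four tiles is correct, and your upper-bound tiling in (i) is exactly the paper's $F_j$. But the two steps you compress are not on the same footing: one gap is shared with the paper, the other is a construction that genuinely fails.

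Part (ii) is the real problem. ``Absorbing the remaining $j-g$ switches inside the vacancy region'' cannot reach $j$ near $M$ once the ball density approaches $1/2$. Take $L=11$, $\lambda=(2,3)$, so $g=2$, $M=5$. Non-scattering forces every gap to have length at least $2$ (with a gap of length $1$ the arc rule of \S 4.1 yields $\lambda=(1,4)$, not $(2,3)$), so the two gaps are $(2,4)$ or $(3,3)$. In a vacancy run of length $d$ whose end columns already carry the (b) of one block-covering L-arc and the (d) of the next, a newly inserted (d) must keep distance at least $2$ from every column of $I$ (no consecutive numbers), so the run absorbs at most $\lfloor (d-2)/2\rfloor$ extra (d)-(b) pairs; your construction therefore stops at $j=3$, while the lemma claims $C_4=C_5=0$. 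Zero-cost tilings for $j=4,5$ do exist but have a completely different shape: for blocks at columns $1$--$2$ and $5$--$7$ one can take $I=\{3,5,7,9,11\}$, with tiles (c)(b) on columns $1,2$ and alternating (d)(b) on columns $3,\dots,11$, so that switch tiles sit on balls as well as on vacancies. Producing such tilings in the tight regime is precisely what the paper's recursive two-case construction of $I_j$ (the move $I_{j-1}\cup\{m_a-2\}$ that re-tiles over a soliton, and the replacement move that shifts a whole run of $I$-columns) accomplishes; this is the content your sketch is missing, and it cannot be recovered by staying inside the vacancy regions.

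On the lower bound in (i): you rightly identify it as the crux, and here you are no worse off than the paper, which disposes of it with ``by the induction on $j$, it is easy to show that this is the minimum.'' However, the ingredient your exchange argument needs is not the no-consecutive constraint on $I$, which restricts only the switch columns and says nothing about what an L-arc may straddle. What makes ``straddling never gains'' true is a quantitative separation property of $U_0$ itself: each gap is at least the minimum of the adjacent block lengths (with analogues for arcs straddling several blocks). Without it the assertion is false --- blocks of lengths $3,1,2$ with unit gaps admit a single L-arc of gain $6-2=4$, exceeding the largest block length $3$ --- and such configurations are excluded only because they are not non-scattering (their invariant is $\lambda=(1,1,4)$). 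So the auxiliary lemma you would need to formulate and prove concerns the gap structure of the non-scattering state, not the combinatorics of $\mathcal{I}_j$.
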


\begin{proof}
We are to show the tilling $F_I$ such that 
$C_j = \sum_{k \in \mathcal{L}} \xi(U_k;F_I)$
explicitly.
The point is to find a tilling $F_I$ which realizes
$\xi(U_k;F_I)=0$ for as many $k \in \mathcal{L}$ as possible.

(i) When $j=0$, $\mathcal{I}_0$ is empty and the $2 \times N$ lattice 
does not have $\circ$. Thus a possible tilling is 
given by filling all lattices with (a) or with (c).
Since $|\lambda| < \frac{L}{2}$, the tilling with (a) as 
\begin{center}
\unitlength=0.9mm
\begin{picture}(140,27)(0,28)

\put(-8,53){\small{$U_k$:}}
\multiput(4,53)(10,0){3}{\small{$0$}}
\multiput(34,53)(10,0){2}{\small{$1$}}
\put(53,53){\small{$\cdots$}}
\put(64,53){\small{$1$}}
\multiput(74,53)(10,0){3}{\small{$0$}}
\multiput(104,53)(20,0){2}{\small{$1$}}
\put(113,53){\small{$\cdots$}}
\put(134,53){\small{$0$}}

\put(0,50){\line(1,0){140}}
\put(0,40){\line(1,0){140}}
\put(0,30){\line(1,0){140}}
\multiput(0,50)(10,0){15}{\line(0,-1){20}}

\multiput(5,50)(10,0){14}{\circle*{1.5}}
\multiput(5,30)(10,0){14}{\circle*{1.5}}

\thicklines

\put(0,45){\line(1,-1){5}}\put(5,40){\line(0,-1){10}}
\multiput(5,50)(10,0){13}{\line(1,-1){10}}
\multiput(15,40)(10,0){13}{\line(0,-1){10}}
\put(135,50){\line(1,-1){5}}

\end{picture}
\end{center}
gives the minimum and $C_0 = |\lambda|$ is realized.
We write $F_0$ for the above tilling.

Let us show the cases of $j=1,\ldots,g$.
Due to Remark \ref{rem:Tvm} and Lemma \ref{lem:U},
it is enough to consider the state without soliton scattering,
since  $C_j$'s are conserved by the evolutions $T_m$.
Set  
\begin{align}\label{eq:n}
n_i = \text{``the coordinate of the soliton of length $\lambda_i$"} 
\in \Z / L\Z,
\end{align}
and define $I_j = \{n_i-1 ~(i=g-j+1, \ldots,g) \} \in \mathcal{I}_j$.
Define a tilling $F_j := F_{I_j}$ by replacing the tiles of $F_0$ 
at $n_i -1 \leq k \leq n_i+\lambda_i~(n_i-1 \in I_j)$
with the tiles (b)-(d) as follows:
\begin{center}
\unitlength=0.9mm
\begin{picture}(140,35)(0,28)

\put(-6,60){\small{$k$:}}
\put(21,60){\scriptsize{$n_i-1$}}
\put(34,60){\scriptsize{$n_i$}}
\put(70,60){\scriptsize{$n_i+\lambda_i$}}

\put(-8,53){\small{$U_k$:}}
\multiput(4,53)(10,0){3}{\small{$0$}}
\multiput(34,53)(10,0){2}{\small{$1$}}
\put(53,53){\small{$\cdots$}}
\put(64,53){\small{$1$}}
\multiput(74,53)(10,0){3}{\small{$0$}}

\put(0,50){\line(1,0){100}}
\put(0,40){\line(1,0){100}}
\put(0,30){\line(1,0){100}}
\multiput(0,50)(10,0){11}{\line(0,-1){20}}

\multiput(5,50)(0,-20){2}{\circle*{1.5}}
\multiput(15,50)(0,-20){2}{\circle*{1.5}}
\multiput(35,50)(10,0){7}{\circle*{1.5}}
\multiput(35,30)(10,0){7}{\circle*{1.5}}
\multiput(15,50)(10,0){9}{\circle{1.5}}
\multiput(15,30)(10,0){9}{\circle{1.5}}

\thicklines

\put(0,45){\line(1,-1){5}}\put(5,40){\line(0,-1){10}}
\put(5,50){\line(1,-1){10}}\put(15,40){\line(0,-1){10}}
\put(15,50){\line(1,-1){20}}
\put(35,50){\line(0,-1){10}}\put(35,40){\line(1,-1){10}}
\put(45,50){\line(0,-1){10}}\put(45,40){\line(1,-1){10}}
\put(55,50){\line(0,-1){10}}\put(55,40){\line(1,-1){10}}
\put(65,50){\line(0,-1){10}}\put(65,40){\line(1,-1){10}}
\put(75,50){\line(1,-1){10}}\put(85,40){\line(0,-1){10}}
\put(85,50){\line(1,-1){10}}\put(95,40){\line(0,-1){10}}
\put(95,50){\line(1,-1){5}}

\end{picture}
\end{center}
Then we have 
$$
\xi(U_k;F_j) = 
\begin{cases}
1 & n_i \leq k \leq n_i+\lambda_i-1; n_i-1 \in I_j
\\
0 & \text{otherwise},
\end{cases}
$$
and obtain 
$\sum_{k \in \mathcal{L}} \xi(U_k;F_j) = \sum_{j=1}^{g-k} \lambda_j$.
By the induction on $j$, it is easy to show that this is the minimum, 
and the claim follows.

(ii) It is enough to show that there is $I \in \mathcal{I}_j$ and $F_I$
such that $C_j= \sum_{k \in \mathcal{L}} \xi(U_k;F_I) = 0$
for the state $(U_k) \in \mathcal{U}_{{\rm BBS},\lambda}$ for $j=g+1,\ldots,M$. 
From $j=g+1$ to $M$, such $I$ and $F_I$ are recursively constructed
as follows. 

If there is $m_a \in I_{j-1}$ as 
$I = I_{j-1} \cup \{m_a-2\} \in \mathcal{I}_j$,
define $I_{j} = I$.
Otherwise, there certainly exists 
a subset $J = \{m_a-3,m_a+2n+3,m_a + 2l ~(l=0,\ldots,n)\}$ of 
$I_{j-1}$ for some $m_a \in \mathcal{L}$ and $n$ as $0 \leq n < M$.
Then define $I_j = I_{j-1} \cup \{m_a-3,m_a+2n+3,m_a+2l+1 ~(l=-1,0,\ldots,n) \} \setminus J \in \mathcal{I}_j$.

In the first case of $I_j$,
define a possible tilling $F_j$
by replacing the tiles of $F_{j-1}$ at $k=m_a-1$ and $k=m_a-2$ 
with (b) and (d) respectively
(if needed, replace the tile at $k=m_a-3$ with (b)).
Since these replacements do not change the quantity of the function $\xi$ 
\eqref{eq:xi}, 
$\xi(U_k;F_g)=\xi(U_k;F_{g+1})=0$ for $k=m_a-1,m_a-2$ $(m_a-3)$, 
$C_j = 0$ follows from $C_{j-1}=0$.

In the second case of $I_j$, define a possible tilling $F_j$
by replacing the tiles of $F_{j-1}$ at 
$k \in \{m_a+2l+1 ~(l=-1,0,\ldots,n) \}$ 
with (d) and 
the tiles at $k \in \{m_a+2l ~(l=-1,0,\ldots,n+1) \}$ with (b).
These replacements do not change the quantity of the function $\xi$,
and $C_j=0$ follows. 
(See the following example.) 
\end{proof}

\begin{Example}
The case of $L=8$, $\lambda=(1,2)$ and $(n_1,n_2)=(3,6)$.
Here is the tilling $F_2$ with $I_2 = \{2,5\} \in \mathcal{I}_2$, 
which gives $C_2 = 0$:
\begin{center}
\unitlength=0.9mm
\begin{picture}(140,30)(0,28)

\put(-8,53){\small{$U_k$:}}
\multiput(4,53)(10,0){2}{\small{$0$}}
\multiput(24,53)(10,0){2}{\small{$1$}}
\put(44,53){\small{$0$}}
\put(54,53){\small{$1$}}
\multiput(64,53)(10,0){2}{\small{$0$}}

\put(0,50){\line(1,0){80}}
\put(0,40){\line(1,0){80}}
\put(0,30){\line(1,0){80}}
\multiput(0,50)(10,0){9}{\line(0,-1){20}}

\multiput(5,50)(0,-20){2}{\circle*{1.5}}
\multiput(25,50)(0,-20){2}{\circle*{1.5}}
\multiput(35,50)(0,-20){2}{\circle*{1.5}}
\multiput(55,50)(0,-20){2}{\circle*{1.5}}
\multiput(65,50)(0,-20){2}{\circle*{1.5}}
\multiput(75,50)(0,-20){2}{\circle*{1.5}}
\multiput(15,50)(10,0){7}{\circle{1.5}}
\multiput(15,30)(10,0){7}{\circle{1.5}}

\thicklines

\put(0,45){\line(1,-1){5}}\put(5,40){\line(0,-1){10}}
\put(5,50){\line(1,-1){20}}
\put(25,50){\line(0,-1){10}}\put(25,40){\line(1,-1){10}}
\put(35,50){\line(1,-1){20}}
\put(55,50){\line(0,-1){10}}\put(55,40){\line(1,-1){10}}
\put(65,50){\line(1,-1){10}}\put(75,40){\line(0,-1){10}}
\put(75,50){\line(1,-1){5}}

\put(85,30){.}
\end{picture}
\end{center}
By setting $I_3 = \{2,5,8\} \in \mathcal{I}_3$ ($m_a=2$), 
we can define $F_3$ which gives $C_3=0$ as
\begin{center}
\unitlength=0.9mm
\begin{picture}(140,30)(0,28)

\put(-8,53){\small{$U_k$:}}
\multiput(4,53)(10,0){2}{\small{$0$}}
\multiput(24,53)(10,0){2}{\small{$1$}}
\put(44,53){\small{$0$}}
\put(54,53){\small{$1$}}
\multiput(64,53)(10,0){2}{\small{$0$}}

\put(0,50){\line(1,0){80}}
\put(0,40){\line(1,0){80}}
\put(0,30){\line(1,0){80}}
\multiput(0,50)(10,0){9}{\line(0,-1){20}}

\multiput(5,50)(0,-20){2}{\circle*{1.5}}
\multiput(25,50)(0,-20){2}{\circle*{1.5}}
\multiput(35,50)(0,-20){2}{\circle*{1.5}}
\multiput(55,50)(0,-20){2}{\circle*{1.5}}
\multiput(65,50)(0,-20){2}{\circle*{1.5}}
\multiput(15,50)(10,0){7}{\circle{1.5}}
\multiput(15,30)(10,0){7}{\circle{1.5}}

\thicklines

\put(0,35){\line(1,-1){5}}
\put(5,50){\line(1,-1){20}}
\put(25,50){\line(0,-1){10}}\put(25,40){\line(1,-1){10}}
\put(35,50){\line(1,-1){20}}
\put(55,50){\line(0,-1){10}}\put(55,40){\line(1,-1){10}}
\put(65,50){\line(1,-1){15}}

\put(85,30){.}
\end{picture}
\end{center}
There is no $m_a \in I_3$ such that $I_3 \cup \{m_a -2\} \in \mathcal{I}_4$.
Thus we set $I_4 = \{2,4,6,8\}$ ($m_a=5,~n=0$) and define $F_4$ as
\begin{center}
\unitlength=0.9mm
\begin{picture}(140,30)(0,28)

\put(-8,53){\small{$U_k$:}}
\multiput(4,53)(10,0){2}{\small{$0$}}
\multiput(24,53)(10,0){2}{\small{$1$}}
\put(44,53){\small{$0$}}
\put(54,53){\small{$1$}}
\multiput(64,53)(10,0){2}{\small{$0$}}

\put(0,50){\line(1,0){80}}
\put(0,40){\line(1,0){80}}
\put(0,30){\line(1,0){80}}
\multiput(0,50)(10,0){9}{\line(0,-1){20}}

\multiput(5,50)(0,-20){2}{\circle*{1.5}}
\multiput(25,50)(0,-20){2}{\circle*{1.5}}
\multiput(45,50)(0,-20){2}{\circle*{1.5}}
\multiput(65,50)(0,-20){2}{\circle*{1.5}}
\multiput(15,50)(10,0){7}{\circle{1.5}}
\multiput(15,30)(10,0){7}{\circle{1.5}}

\thicklines

\put(0,35){\line(1,-1){5}}
\multiput(5,50)(20,0){3}{\line(1,-1){20}}
\put(65,50){\line(1,-1){15}}

\put(85,30){.}
\end{picture}
\end{center}
 
\end{Example}

\begin{proof}(Proposition \ref{prop:pBBS-curve})
Fix $\lambda = (\lambda_1<\cdots<\lambda_g)$.
From Lemma \ref{lemma:C}, the tropical polynomial \eqref{U-spectral}
is written as
$$
F(X,Y) =
\min\bigl[2Y, Y + \min[\min_{j=0,\ldots,g-1}[jX+C_j],
\min_{j=g,\ldots,M}[jX]],XL, L\bigr],
$$
and the corresponding tropical curve $\Gamma$ 
has three infinite domains determined by $F(X,Y)=XL$, $L$ and $2Y$,
which fill the domain $D=\{(X,Y) \in \R^2 ~|~ X<0 \text{ or } Y<0 \}$.
In the rest domain $\R^2 \setminus D$,
we have $\min_{j=g,\ldots,M}[jX] = gX$.
Thus the defining equation of $\Gamma$ can be reduced to \eqref{pBBS-trop}.
\end{proof}

Though $\Gamma$ is not smooth, 
we can calculate its period matrix $\Omega$ in the same way
as Definition \ref{i:def:omega}:
\begin{align}\label{eq:omega}
\Omega = (\langle \tilde{B}_k, \tilde{B}_j \rangle)_{k,j = 1,\ldots,g}.
\end{align}
Here we set $\tilde{B}_k = \sum_{j=1}^{g+1-k} B_j$
by using the basis $B_j$'s of $\pi_1(\Gamma')$ as Figure 4.
Then we obtain our final result:

\begin{Prop}
The period matrix $\Omega$ \eqref{eq:omega}  coincides with 
the period matrix $A$ \eqref{A}.
In particular, $J(L,\lambda)$ \eqref{pBBS-J}
is nothing but the tropical Jacobian $J(\Gamma)$ of $\Gamma$.
\end{Prop}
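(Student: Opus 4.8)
The plan is to read off every edge length of the compact part $\Gamma^{\mathrm{cpt}}$ directly from the tropical polynomial \eqref{pBBS-trop}, assemble the Gram matrix $(\langle B_i,B_j\rangle)$ in the cycle basis of Figure 4, and then pass to the basis $\tilde B_k$ to recognise $A$. First I would locate the bounded cells. Writing $G(X)=\min[gX,(g-1)X+C_{g-1},\ldots,X+C_1,C_0]$ for the inner min, $G$ is concave piecewise linear with breakpoints exactly at $X=\lambda_1<\cdots<\lambda_g$ (since the lift $m\mapsto C_m$ is strictly convex under $\lambda_1<\cdots<\lambda_g$). Thus $\Gamma^{\mathrm{cpt}}$ is a chain of $g$ lens-shaped cells $B_1,\ldots,B_g$ strung along the $X$-axis, the cell $B_i$ lying over $[\lambda_{i-1},\lambda_i]$ (with $\lambda_0=0$), and $B_1$ bounded on the left by the slanted ``spine'' edge issuing from the corner at $(0,0)$. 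The bottom and top arcs of $B_i$ have slope $\pm(g-i+1)$, hence lattice length $\lambda_i-\lambda_{i-1}$ each, while the vertical edge over $X=\lambda_i$ has lattice length $L-2G(\lambda_i)$.

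The crux is the identity
\[
G(\lambda_i)=(g-i)\lambda_i+\sum_{r=1}^{i}\lambda_r=\sum_{j=1}^{g}\min(\lambda_i,\lambda_j),
\]
which shows that the vertical edge over $X=\lambda_i$ has length exactly $p_i$ as in \eqref{A}. The spine turns out to have horizontal extent $1$ (it runs from $(0,0)$ to the corner at $X=1$), so the top arc of $B_1$ is shortened by $1$ relative to its bottom arc; this is compensated so that $\langle B_1,B_1\rangle=2(\lambda_1-\lambda_0)+p_1$. Together this yields a tridiagonal Gram matrix, entirely parallel to the trop-pToda matrix \eqref{i:omega}: adjacent lenses share one vertical edge with opposite orientations, so $\langle B_i,B_{i+1}\rangle=-p_i$; $\langle B_i,B_j\rangle=0$ for $|i-j|\geq2$; and $\langle B_i,B_i\rangle=p_{i-1}+p_i+2(\lambda_i-\lambda_{i-1})$.

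Finally I would compute $\langle\tilde B_k,\tilde B_l\rangle$ by bilinearity, using $\tilde B_k=\sum_{j=1}^{g+1-k}B_j$. Geometrically $\tilde B_k$ is the outer boundary of the union of the first $g+1-k$ lenses, because the shared interior verticals cancel in the sum; so for $k\leq l$ the oriented common part of $\tilde B_k$ and $\tilde B_l$ is the spine together with the two arcs out to $X=\lambda_{g+1-l}$, traversed coherently. The diagonal $p_{i-1}+p_i$ terms telescope against the off-diagonal $-p_i$, leaving the single outermost vertical, and the arc lengths telescope to $2\lambda_{g+1-l}$, giving
\[
\langle\tilde B_k,\tilde B_l\rangle=p_{g+1-k}\,\delta_{kl}+2\min(\lambda_{g+1-k},\lambda_{g+1-l}),
\]
which is the entry $A_{g+1-k,\,g+1-l}$. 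Since $k\mapsto g+1-k$ is a unimodular change of basis of $\Z^g$, the period matrix $\Omega$ \eqref{eq:omega} agrees with $A$ \eqref{A}, and hence $J(\Gamma)=\R^g/\Omega\Z^g$ coincides with $J(L,\lambda)=\R^g/A\Z^g$ \eqref{pBBS-J}.

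The hard part will be justifying the edge-length reading on the \emph{non-smooth} curve $\Gamma$. The corner vertices are not unimodular: at $(0,0)$ the edges of slopes $g$, $L-g$ and $L/2$ meet with multiplicity $L-2g$, and similar high-multiplicity corners occur where the spine meets the top arc and at the two infinite ends. One must therefore check that the metric and the bilinear form of Definition \ref{i:def:omega} still apply verbatim to the interior edges (they do, since only primitive tangent directions of the bounded edges enter), and one must track the orientations together with the asymmetry produced by the slanted spine with care. Once the vertical lengths are pinned to $p_i$ via the $\min$-identity above, the remaining step is exactly the telescoping already seen in the trop-pToda computation.
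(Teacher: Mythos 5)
Your proof is correct, and there is in fact nothing in the paper to compare it against: the authors state this Proposition as their final result, immediately after defining $\Omega$ in \eqref{eq:omega} and the cycles $\tilde B_k$, and give no proof at all. Your computation supplies exactly the verification they leave implicit, and each step checks out: the breakpoint analysis of the inner minimum $G(X)$ (concavity from strict convexity of $j\mapsto C_j$, breakpoints at $\lambda_1<\cdots<\lambda_g$), the identity $G(\lambda_i)=(g-i)\lambda_i+\sum_{r\le i}\lambda_r=\sum_j\min(\lambda_i,\lambda_j)$ pinning the vertical edge over $X=\lambda_i$ to lattice length $p_i$, the lattice-length-one spine from $(0,0)$ to $(1,L-g)$ making $\langle B_1,B_1\rangle=p_1+2\lambda_1$ while $\langle B_i,B_i\rangle=p_{i-1}+p_i+2(\lambda_i-\lambda_{i-1})$ for $i\ge 2$, the off-diagonal entries $-p_i$, and the telescoping in the basis $\tilde B_k$. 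Two remarks. First, your conclusion $\langle\tilde B_k,\tilde B_l\rangle=A_{g+1-k,\,g+1-l}$ shows that, with the labelling of Figure 4 (where $B_1$ is the leftmost cell), $\Omega$ equals $A$ only after the index reversal $k\mapsto g+1-k$; the paper's word ``coincides'' silently absorbs this relabelling, and your explicit observation that the reversal lies in $\mathrm{GL}_g(\Z)$, so that $\R^g/\Omega\Z^g\cong\R^g/A\Z^g$, is more careful than the statement itself. Second, the difficulty you flag in your last paragraph about non-smoothness is not actually an obstruction here: the paper \emph{defines} $\Omega$ for this non-smooth $\Gamma$ by decreeing that the recipe of Definition \ref{i:def:omega} applies verbatim, so the proof needs only the lattice lengths of the bounded edges and their oriented overlaps, which are well defined regardless of the non-unimodular corners at $(0,0)$, at $(1,L-g)$, and at the two ends of the chain; your resolution of this point is the right one.
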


\begin{Remark}
The trop-pToda and the pBBS are closely related dynamical systems \cite{IT08}.
Here we just note the relation between 
the two tropical Jacobians studied in \S 3 and \S 4.
By setting $N=g+1$, $C_1=0$ and $C_{N+1} = L$ in \S 3,
and identifying $\lambda_i$'s in the both sections, 
$J(L,\lambda)$ \eqref{pBBS-J} turns out to be isomorphic to the quotient space 
of $J(\Gamma_C)$ \eqref{pToda-J} by the action 
$\nu: J(\Gamma_C) \to J(\Gamma_C);~[z] \mapsto [z + L{\bf e}_1]$.
\end{Remark}


\begin{thebibliography}{99}

\bibitem{DT76}
E.~Date and S.~Tanaka,
Analogue of inverse scattering theory for the discrete Hill's equation 
and exact solutions for the periodic Toda lattice, 
Prog. Theor. Phys., \textbf{55}, 457--465 (1976).

\bibitem{FOY00} 
K.~Fukuda, M.~Okado and Y.~Yamada,
Energy functions in box ball systems,
Internat. J. Modern Phys. A 15, no. 9, 1379--1392 (2000).

\bibitem{H77}
R.~Hirota, 
Nonlinear partial difference equations. I. A difference
analogue of the Korteweg-de Vries equation,
J. Phys. Soc. Japan, \textbf{43}, 1424--1433 (1977).

\bibitem{HT95}
R.~Hirota and S.~Tsujimoto,
Conserved quantities of a class of nonlinear difference-difference 
equations,
J. Phys. Soc. Japan, \textbf{64}, No. 9, 3125--3127 (1995).

\bibitem{IKT11}
R.~Inoue, A.~Kuniba and T.~Takagi,
Integrable structure of Box-ball system:
crystal, Bethe ansatz, ultradiscretization and tropical geometry 
(review article),
J. Phys. A: Math. Theor., \textbf{45}, 073001 (2012).

\bibitem{IT08}
R.~Inoue and T.~Takenawa, 
Tropical spectral curves and integrable cellular automata,
Int. Math. Res. Not. IMRN,  \textbf{27}, Art ID. rnn019, 27 pp (2008).

\bibitem{IT09}
R.~Inoue and T.~Takenawa, 
A tropical analogue of Fay's  trisecant identity and 
an ultra-discrete periodic Toda lattice,
Comm. Math. Phys., \textbf{289}, pp 995--1021 (2009).

\bibitem{IT09b}
R.~Inoue and T.~Takenawa, 
Tropical Jacobian and the generic fiber of the ultra-discrete 
periodic Toda lattice are isomorphic,
RIMS K\^oky\^uroku Bessatsu, B13, pp 175--190 (2009).

\bibitem{IMS-Book}
I.~Itenberg, G.~Mikhalkin and E.~Shustin,
Tropical algebraic geometry,
Oberwolfach Seminars, 35 (Birkh\"user Verlag, Basel, 2007).

\bibitem{KvM75}
M.~Kac and P.~van Moerbeke,
On some periodic Toda lattices,
Proc. Natl. Acad. Sci. USA \textbf{72}, 1627--1629 (1975);
A complete solution of the periodic Toda problem,
id., 2879--2880 (1975).

\bibitem{KimijimaTokihiro02}
T.~Kimijima and T.~Tokihiro,
Initial-value problem of the discrete periodic 
Toda equations and its ultradiscretization,
Inverse Problems, \textbf{18}, 1705--1732 (2002).
 
\bibitem{MikhaZhar06}
G.~Mikhalkin and I.~Zharkov, 
Tropical curves, their Jacobians and theta functions,
Contemp. Math., \textbf{465}, 203--230 (2008).

\bibitem{KS06}
A. Kuniba and R. Sakamoto,
The Bethe ansatz in a periodic box-ball system and 
the ultradiscrete Riemann theta function,  
J. Stat. Mech., P09005 (2006). 

\bibitem{KTT06}
A.~Kuniba, T.~Takagi and A.~Takenouchi, 
Bethe ansatz and inverse scattering transform in a periodic box-ball system,
Nuclear Phys. B 747, no. 3, 354--397 (2006).

\bibitem{MIT05}
J.~Mada, M.~Idzumi and T.~Tokihiro,
Path description of conserved quantities of generalized periodic 
box-ball systems,
J. Math. Phys., \textbf{46} 022701 (2005).

\bibitem{MIT08}
J.~Mada, M.~Idzumi and T.~Tokihiro,
The box-ball system and the $N$-soliton solution of the 
ultradiscrete KdV equation,
J. Phys. A: Math. Theor., \textbf{41}, 175207 (23pp) (2008).

\bibitem{TS90}
D.~Takahashi and J.~Satsuma,
A soliton cellular automaton,
J. Phys. Soc. Japan, \textbf{59}, no. 10, 3514--3519 (1990).

\bibitem{YT02}
F.~Yura and T.~Tokihiro,
On a periodic soliton cellular automaton,
J. Phys. A: Math. Gen., \textbf{35}, 3787--3801 (2002).

\end{thebibliography}
\end{document}